\documentclass[10pt,a4paper]{llncs}
\usepackage[margin=2cm]{geometry}
\usepackage{fullpage}
\bibliographystyle{plainurl}
\usepackage[utf8]{inputenc}
\usepackage{amsmath,bbm,amssymb,tabularx,xspace,multirow,mdframed,tikz,makecell,multirow,array,standalone,enumitem}
\newcolumntype{Y}{>{\centering\arraybackslash}X}
\usepackage{subfiles}
\usepackage[ruled,vlined]{algorithm2e}
\usetikzlibrary{positioning}
\tikzset{main node/.style={circle,draw,fill=black,outer sep=0pt,scale=0.35}}
\newcommand{\txtscale}{0.65}
\spnewtheorem{assumption}[theorem]{Assumption}{\bfseries}{\itshape}
\spnewtheorem{observation}{Observation}{\bfseries}{\itshape}

\newcommand{\R}{\mathbb R}

\usepackage{xcolor}
\usepackage[dvipsnames]{xcolor}
\newcommand{\eps}{\varepsilon}

\usepackage{hyperref}
\usepackage{cleveref}
\usepackage[figure]{hypcap}
\usepackage[normalem]{ulem}
\usepackage[bibliography=common]{apxproof}
\newcommand{\Greedy}{\textsc{Greedy}\xspace}

\newcommand{\PrioGreedy}{\textsc{PrioGreedy}$_\text{original}$\xspace}
\newcommand{\PrioGreedyf}{\textsc{PrioGreedy}\xspace}
\newcommand{\PrioCenterf}{\textsc{PrioCenter}\xspace}
\newcommand{\Prioleftf}{\textsc{Prio$_{V_\ell}$}\xspace}
\newcommand{\Alg}{\textsc{Alg}\xspace}
\newcommand{\Opt}{\textsc{Opt}\xspace}

\newcommand{\UBK}{9/4}
\newlist{thmlist}{enumerate}{1}
\setlist[thmlist]{label=\arabic{thmlisti}., ref=\thetheorem.\arabic{thmlisti}}

\renewenvironment{proof}[1][Proof]{\par\noindent\textit{#1.}\ }{\hfill$\square$\par}
\newenvironment{myproof}{%
  \par\noindent\textit{Proof.}\ }{}

\begin{document}
\title{The Buffer Minimization Problem for Scheduling Flow Jobs with Conflicts}
\author{
Niklas Haas\orcidID{0009-0005-7085-8699} \and
Sören Schmitt\orcidID{0000-0002-6117-1343} \and
Rob van Stee\orcidID{0000-0002-3664-0865}
}

\institute{
University of Siegen, Department of Mathematics, Germany\\
\email{niklas.haas@student.uni-siegen.de, \{soeren.schmitt,rob.vanstee\}@uni-siegen.de}
}

\authorrunning{N. Haas, S. Schmitt and R. van Stee} 

\maketitle 
\pagestyle{plain}

\begin{abstract}
    We consider the online \emph{buffer minimization in multiprocessor systems with conflicts problem} (in short, the \emph{buffer minimization problem}) in the recently introduced flow model. In an online fashion, workloads arrive on some of the $n$ processors and are stored in an input buffer. Processors can run and reduce these workloads, but conflicts between pairs of processors restrict simultaneous task execution. Conflicts are represented by a graph, where vertices correspond to processors and edges indicate conflicting pairs. An online algorithm must decide which processors are run at a time; so provide a valid schedule respecting the conflict constraints.
    
    The objective is to minimize the maximal workload observed across all processors during the schedule.
    Unlike the original model, where workloads arrive as discrete blocks at specific time points, the flow model assumes workloads arrive continuously over intervals or not at all.
    We present tight bounds for all graphs with four vertices (except the path, which has been solved previously) and for the families of general complete graphs and complete bipartite graphs. We also recover almost tight bounds for complete $k$-partite graphs.
    
    For the original model, we narrow the gap for the graph consisting of a triangle and an additional edge to a fourth vertex. 
\end{abstract}

\section{Introduction}

In practical multiprocessor settings, it often happens that processors (also called machines) share resources. In such cases, only one of these machines can use the resource at any given time. We model such a situation using a \emph{conflict graph} which specifies for the given set of machines, which are the vertices of the graph, which pairs of machines share resources. This is indicated by having an edge between two machines.

We consider a model with $m$ parallel machines. Load arrives over time, separately to each of the machines. Time is continuous. 
Each machine has a buffer to store unprocessed load.
If a machine is running, it processes load at a rate of 1.
A machine can only run if none of the machines with which it has a conflict (that is, its neighbors in the conflict graph) is running. The goal is to minimize the maximum load that is ever stored in a single buffer. 

We consider the online setting in which the future arriving load is unknown. An online algorithm is compared to the optimal offline solution that can only be achieved if the whole input is known in advance. Note that in the optimal solution, buffers are often also required: consider two neighboring machines that receive load at the same time.

The question now becomes: what size buffers does an online algorithm need, relative to the buffer size in an optimal solution, so that the online algorithm can process any input without overflowing any buffer? This number is known as the competitive ratio, and it depends on the conflict graph and on the number of machines.  Chrobak et al.~\cite{10.1007/3-540-48224-5_70} who introduced this model considered two versions: one where the online algorithm knows the optimal buffer size in advance (the weak competitive ratio) and one where it does not (the strong competitive ratio). In this paper, we will focus on the first measure, which we will usually call simply \emph{competitive ratio}. 

In the original model, load arrives as discrete tasks of various sizes. Höhne and van Stee~\cite{DBLP:journals/tcs/HohneS21} later introduced the so-called \emph{flow model} in which at all times, for any machine, load either arrives at a constant rate of 1 (not as an instantaneous block of load) or not at all. The model is a special case of the original model which maintains its overall complexity. 
An interesting feature of the flow model is that deterministic algorithms appear to be as powerful as randomized algorithms (against an oblivious adversary).

Previously only complete ($k$-partite) graphs and paths have been considered. We go the next natural step and focus on graphs with four vertices. Our goal is to investigate which factors contribute to the problem's difficulty.
In this paper, we consider all graphs with at most four vertices and give tight results for the flow model in all cases. We also give tight or nearly tight results for larger graphs, as well as tight results for almost all graphs with four vertices in the original model. All of our algorithms are short to state, but are carefully constructed to handle various difficult inputs. The proofs involve intricate considerations of what combinations of delays are possible on the various machines. A main building block of the proof consists of invariants for the total delay on pairs or triples of adjacent machines.

\subsection{Related Work}
The \emph{buffer minimization in multiprocessor systems with conflicts} problem was introduced by  Chrobak et al. in 2001 \cite{10.1007/3-540-48224-5_70}. They present a universal algorithm  showing that even the strong competitive ratio on any graph is bounded. Then they consider various families of graphs. For the complete graphs $K_n$ the authors show that the weak and strong competitive ratios are $H_n$ the $n$-th harmonic number, which is achieved by the natural greedy approach. They also show that this approach is almost optimal for complete $k$-partite graphs $K_{n_1,\ldots,n_k}$. If the graph is a tree with diameter at most $\Delta$ they provide an $(1+\Delta/2)$-competitive algorithm. In the final part of their work, the authors give results for the remaining graphs with four vertices. For the path with four vertices they claim that both the weak and strong competitive ratios are at least $2$ and show an upper bound of $5/2$ on both. For the graph $K_3+e$ which you get by connecting an additional vertex to one vertex of $K_3$, they claim a lower bound of $13/6$ for the weak competitive ratio and an upper bound of $5/2$ for the strong competitive ratio. Finally, for $K_4-e$, the graph you get by deleting an edge from $K_4$, they claim that the weak competitive ratio is exactly $2$ and show that the strong competitive ratio is below $5/2.$ 

Höhne and van Stee focus on graphs that are paths \cite{DBLP:journals/tcs/HohneS21}. They close the gap for the path with four vertices by providing a 
$9/4$-competitive algorithm and a matching lower bound. For the path with $5$ vertices they show that the 
competitive ratio is between $16/7$ and $5/2$. For the general paths with at least $6$ vertices they provide a lower bound of $12/5$ and an upper bound linear in the number of vertices of $n/2+1/4.$ 
They also introduce a new model called the flow model as described above. In this model the workload arrives at a constant rate over time in intervals instead of a block arriving at a specific point in time. So the workload function for the machines is continuous in time. This is a cleaner model which avoids the feature of the original model that the competitive ratio is usually achieved on the very last job that arrives, which then often increases the competitive ratio maintained so far by 1. Nevertheless, it is not the case that the competitive ratios for the two models simply differ by 1, and (as can also be seen in this paper) the algorithms for the two models are often quite different on the same graph.

For small paths with up to five vertices Höhne and van Stee present tight results. The paths with two or three vertices can be solved optimally (competitive ratio of 1). For the path with four vertices the competitive ratio is $4/3.$ In contrast to the original model, they give a tight bound of $3/2$ for the path with five vertices. Finally, if the path consists of at least six vertices the competitive ratio is between $3/2$ and $n/2-2/3.$

The concept of conflicts in scheduling problems is well-established, with many variations having been extensively studied. These variations differ in several key aspects like objective function, job model, online/offline and conflict model. While we focus on conflicts between machines, another model that has been widely studied is one where conflicts arise between jobs. Several studies consider models where conflicting jobs cannot be executed concurrently \cite{DBLP:journals/tcs/BakerC96,DBLP:journals/tcs/BodlaenderJ95,DBLP:journals/scheduling/EvenHKR09,DBLP:journals/candie/HongL18,DBLP:conf/soda/IraniL96}, while others address scenarios in which conflicting jobs cannot be scheduled on the same machine \cite{DBLP:conf/mfcs/BodlaenderJ93,DBLP:journals/dam/BodlaenderJW94,DBLP:journals/scheduling/KowalczykL17,DBLP:journals/cor/MallekBB19,9317026,DBLP:journals/itor/MallekB24}. In \cite{DBLP:conf/waoa/BuchemKW22}, conflicts occur between machines, but each job is associated with pre- and post-processing blocking times. The goal in this case is to find a schedule where the blocking times of jobs on conflicting machines do not overlap.

Dósa and Epstein consider a model in which, rather than having individual input buffers for each machine, a universal buffer with a fixed capacity is used to delay job execution \cite{DBLP:journals/jco/DosaE10}. 

\subsection{Results}
We summarize our and known results for the (weak) competitive ratio in the following table.

\begin{table}[ht]
    \caption{Lower and upper bounds on the competitive ratio in the flow model and the original model. Here, $\mu$ is the number of the independent sets in the $k$-partition that consist of a single vertex in a complete $k$-partite graph.}
    \label{tab:results}
    \centering
    \renewcommand*{\arraystretch}{1.2}
\resizebox{\textwidth}{!}
{    \begin{tabularx}{\textwidth}{XXXXX}
    Graph & Flow LB & Flow UB & Original LB & Original UB\\ \hline
    $K_{1,2}$ (Path) & $1$\hfill\cite{DBLP:journals/tcs/HohneS21}  & $1$\hfill\cite{DBLP:journals/tcs/HohneS21} & $2$\hfill\cite{private} & $2$\hfill\cite{10.1007/3-540-48224-5_70}\\ \hline
    $P_4$ (Path) & $4/3$\hfill\cite{DBLP:journals/tcs/HohneS21} & $4/3$\hfill\cite{DBLP:journals/tcs/HohneS21} & $9/4$\hfill\cite{DBLP:journals/tcs/HohneS21} & $9/4$\hfill\cite{DBLP:journals/tcs/HohneS21}\\ \hline
   $K_n$ $(n\geq2)$ & $H_n-\frac{1}{2}$
   & $H_n-\frac{1}{2}$  & $H_n$\hfill\cite{10.1007/3-540-48224-5_70} & $H_n$\hfill\cite{10.1007/3-540-48224-5_70}\\ \hline
   \makecell[l]{$K_{k,n-k}$} & \makecell[l]{$\mu\ge1$: $1$
   \\$\mu=0$: $3/2$} & \makecell[l]{$\mu\ge1$: $1$\\$\mu=0$: $3/2$} & \makecell[l]{$\mu=2$: $3/2$ \\$\mu\le1$: $2$}\hfill\makecell[r]{\cite{10.1007/3-540-48224-5_70}\\\cite{private}} & \makecell[l]{$\mu=2$: $3/2$\\$\mu\le1$: $2$}\hfill\cite{10.1007/3-540-48224-5_70}\\ \hline
  \makecell[l]{complete\\ $k$-partite\\$(k\ge3)$}& \makecell[l]{$\mu\ge1$: \\ $H_{k-1}-\frac{1}{2}+\frac{k-\mu}{k-1}$ \\$\mu=0$: \\ $H_{k-1}+\frac{1}{2}$} & $H_{k-1}+\frac{1}{2}$ &  \makecell[l]{$\mu\ge1$: \\ $H_{k-1}+\frac{k-\mu}{k-1}$\\ $\mu=0$: \\ $H_{k-1}+1$}\hfill\cite{private} & $H_{k-1}+1$\hfill\cite{10.1007/3-540-48224-5_70}\\[5ex] \hline
   $K_3+e$ & $4/3$ & $4/3$ & $13/6$ & $\UBK$\\ \hline 
   $K_4-e$ & $3/2$ & $3/2$ & $2$\hfill\cite{private} & $2$\hfill\cite{private}\\ 
\end{tabularx}
}
\end{table}

\section{Preliminaries}
For simplicity we assume without loss of generality for the rest of the paper that the offline buffer size is $1.$ (Precisely, for any input $I$ the online algorithm receives the scaled input $I'$ such that $\Opt(I')=1$ along with the information that $\Opt(I')=1$.)

We refer to the processors as machines and to the workload as load. The vertices $v_i$ of the conflict graph $G$ represent the machines $m_i$ and an edge indicates a conflict between the adjacent machines, restricting simultaneous task execution. The buffer size is the maximal load on a machine over time. 

For the original model, an input can be described by a list of triples, where the triple $(t,k,s)$ means that a job of size $s$ arrives at time $t$ on machine $k$. However, in many lower bound constructions load arrives only at integer times, and then we will just use a vector $(t;s_1,\dots,s_n)$ to describe the loads arriving at some time $t$, and a list of such vectors to describe a complete input.

For the flow model, an input is a vector of load functions $\R_{\ge0}\to\{0,1\}$, one function for each machine which describes for each non-negative time $t$ whether or not the machine is receiving load at time $t$. In lower bound constructions, similar to in the original model the incoming loads will usually be piecewise constant vectors, changing only at integer times. They will be described as lists of vectors, one vector for each time step.

Our timeline starts at time $t=0.$ Furthermore, as always we only consider finite inputs, meaning there exists a time $T$ after which no machine receives any additional load. We also focus exclusively on connected graphs because the connected components of a graph can be analyzed independently without impacting the overall decision-making process.

\begin{definition}\label{def:CR}
The (strong) competitive ratio of an algorithm \Alg for graph $G$ is defined as $\sup_{I}\frac{\Alg(I)}{\Opt(I)}$, where $I$ is a feasible input. The (strong) competitive ratio of a graph $G$ is defined as $\inf_{\Alg}\sup_{I}\frac{\Alg(I)}{\Opt(I)}$, where \Alg is a feasible algorithm and $I$ a feasible input. The weak competitive ratio is defined analogously, where additionally $\Opt(I)=1.$
\end{definition}
In this paper, we
use competitive ratio to refer to the weak competitive ratio and strong competitive to refer to the competitive ratio where the offline buffer size is unknown in advance.
\begin{definition}
Let $a_i^t$ be the load on machine $m_i$ of an online algorithm \Alg at time $t$ and $z_i^t$ be the load on machine $m_i$ of an optimal offline algorithm \Opt at time $t$. Define the delay of machine $m_i$ at time $t$ as $d_i^t:=a_i^t-z_i^t.$\footnote{We may omit the exponent for time $t$ if we do not need to specify a specific point in time.}

For notational convenience, we denote the load on machine $m_i$ when ignoring the load that arrives exactly at time $t$ by $a_i^{t^-}$. 
\end{definition}

\begin{assumption}
We assume that a buffer size of $1$ suffices if the input was known in advance. Precisely, we assume $z_i\le 1$ at all times and there exists a time $t$ and $i\in\{1,\ldots,n\}$ such that $z_i^t=1$. 
\end{assumption}

\subsection{Flow Model}

In the \emph{flow model} load on machines arrives as a continuous flow over time, in contrast to the original model where load arrives as a single block at a specific point in time. The input is revealed to the online algorithm as follows. At any time $t$, it is aware of the set of machines that are currently receiving load (at speed 1), without knowing for how long these machines will receive load. It can make its schedule starting from time $t$ based on this knowledge; the schedule can change (possibly many times) without the set of machines that receive load changing. The set of machines that are receiving load can change arbitrarily often in any interval.

This restricts the set of allowed inputs; for instance, two connected machines cannot receive load at speed $1/2$ in this model. However, any feasible input can be approximated by an allowed input arbitrarily well by changing back and forth between different sets of machines receiving load. Hence, the power of the adversary is not restricted.
We can approximate any feasible input for this model arbitrarily well by an input for the original model by letting very small blocks of load arrive over time. Comparing this to the flow model described above, it means that the algorithm has a small amount of lookahead (the size of these small blocks), as it is aware of some load that is still to arrive. As the size of the blocks tends to 0, the input tends to an input for our flow model. Conversely, for any input for the original model, a corresponding input for the flow model exists in which each arriving block of load arrives at a rate of 1 instead of instantaneously.

Hence, any upper bound that was proved for the original model still holds; however, we can usually achieve better results in this special case. Conversely, any lower bound for the flow model also holds for the original model. Note that the lower bound for the original model in~\cite{DBLP:journals/tcs/HohneS21} crucially uses that some of the input arrives as a flow, which helps to keep various options open for the adversary.

An algorithm for the flow model must select at each time $t$ an independent set of machines such that each machine either has nonzero load or is currently receiving load. We simplify the setting by allowing algorithms to instead set \emph{speeds} for each machine which can change arbitrarily often. Such an algorithm can be implemented by rapidly changing back and forth between various independent sets of machines. 
The speeds must be such that there is no interval of nonzero length $\ell$ in which any clique of machines processes more than $\ell$ load. Furthermore, for simplicity we allow a machine with zero load to run at nonzero speed, but of course its load does not decrease in this case. Using speeds does not make it easier to design good algorithms (indeed, none of our algorithms set speeds explicitly), but it makes it easier to analyze them.

Initially, all machines are empty, so $a_i^0=0$ for $i\in\{1,\ldots,n\}.$ 

We define the natural \Greedy strategy that we will use as a (sub)routine in many of our presented algorithms. At any time $t$, \Greedy recursively selects a machine with highest load (prioritizing machines receiving load) for running at time $t$ that is not in conflict with any previously selected machine. If there is a tie for the highest load (also in a later step of the recursion), \Greedy will run these machines at equal speeds as discussed above. In the flow model, \Greedy has the following nice property.

\begin{lemma}\label{lem:greedy}
    In the flow model, while \Greedy is running, there is never a single machine which has the unique highest load among all machines.\footnote{This statement does not hold in the original model where jobs arrive as blocks of load.
    }
\end{lemma}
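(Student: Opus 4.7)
The plan is to argue by contradiction, using the continuity of load functions in the flow model together with the initial all-zero state. Suppose for contradiction that at some time $T>0$ while \Greedy is running, machine $m_i$ has strictly larger load than every other machine. Since $a_k^0=0$ for all $k$, the set of times $t \in [0,T]$ at which $m_i$ is uniquely top-loaded is nonempty and bounded below by a positive number. Let $t_0$ denote its infimum. By continuity of all $a_k^t$ in $t$, $m_i$ cannot be strictly top at $t_0$ itself (otherwise the property would extend into a left-neighborhood of $t_0$, contradicting the infimum). Hence at $t_0$, $m_i$ is tied at the maximum load with at least one other machine $m_j$, and on some right-neighborhood $(t_0,t_0+\delta)$, $m_i$ is the unique maximum.

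On this right-neighborhood, $m_i$ is the unique top machine, so \Greedy selects $m_i$ and it runs at full effective speed; hence $\dot a_i = r_i - 1 \in \{-1,0\}$. I would then split into two cases according to the adjacency of $m_i$ and $m_j$ in $G$. If $m_j$ is a neighbor of $m_i$, the conflict constraint forbids $m_j$ from running while $m_i$ runs, so $\dot a_j = r_j \ge 0$. Combined with $a_i^{t_0}=a_j^{t_0}$, this yields $a_j^t \ge a_i^t$ throughout $(t_0,t_0+\delta)$, directly contradicting that $m_i$ is uniquely top. This case is essentially a one-line derivative comparison.

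If $m_j$ is not a neighbor of $m_i$, then after \Greedy selects $m_i$, the recursive step will also select $m_j$ (or some machine tracking the top load level via the tie-rule). Here I would invoke the ``equal speeds for tied machines'' clause in the definition of \Greedy together with the priority given to machines that are currently receiving load. The key observation is that these rules are exactly set up so that, at the moment a tie would otherwise break, the top-level machines' speeds are chosen to keep $\dot a_j \ge \dot a_i$: if both $m_i$ and $m_j$ have the same arrival status, equal speeds preserve the tie; and if their arrival statuses differ, the receiving-priority forces the configuration (receiving machine runs at its input rate while the tied non-receiving neighbor is left untouched) that again preserves $a_j \ge a_i$. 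In either sub-case we obtain the contradiction $a_j^t \ge a_i^t$ just after $t_0$.

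I expect the main obstacle to be the non-adjacent case, specifically the sub-case in which one tied machine is receiving load and the other is not. The naive reading of ``equal speeds'' would instantaneously break the tie, so the proof has to use the feature of the flow model stressed after the statement in the footnote: because load arrives continuously and not as discrete blocks, \Greedy can assign speeds that precisely compensate for the asymmetric arrival rates among the top machines, which is not possible in the original model. Once this speed-preservation is made precise, the contradiction is immediate, and the lemma follows.
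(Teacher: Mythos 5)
Your Case 1 (a tied machine that is a neighbor of $m_i$) is sound, but Case 2 rests on a false description of \Greedy, and this is exactly where the real difficulty of the lemma lies. \Greedy never withholds a machine to preserve a tie: it recursively selects every compatible machine in order of load, so if $m_i$ and $m_j$ are tied at the top, non-adjacent, with $m_i$ receiving load and $m_j$ not, then both are selected and run at speed $1$, giving $\dot a_i=0$ and $\dot a_j=-1$; the tie breaks immediately in favour of $m_i$. (Concretely, on $P_3$ with loads $(1,0,1)$ and only $m_1$ receiving, \Greedy runs $m_1$ and $m_3$ and $m_1$ instantly has strictly larger load than $m_3$.) So your claimed speed-compensation mechanism does not exist, and the local argument at $t_0$ cannot exclude $m_i$ becoming uniquely top when all machines tied with it at $t_0$ are non-neighbors, i.e.\ when every neighbor of $m_i$ is already strictly below $a_i$ at $t_0$. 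The lemma survives only because that configuration itself can never be reached, and proving this requires a stronger, global invariant: under \Greedy, every machine always has some neighbor with load at least as high. This is what the paper proves, by taking the supremum $t_0$ of times at which no machine strictly exceeds all of its neighbors and showing that, for the gap to open, $m$ must have been idle on a set of positive measure while strictly above all its neighbors, contradicting \Greedy's priority rule. Your rewinding only to the left endpoint of the last ``uniquely top among all machines'' interval discards precisely the information (about the neighbors of $m_i$) needed to run that argument, so the gap in Case 2 is not a technicality but a missing idea.

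A smaller, fixable point: the infimum $t_0$ of the set $S$ of times at which $m_i$ is uniquely top need not have a right-neighborhood contained in $S$ ($S$ could accumulate at $t_0$ from the right); since $S$ is open by continuity, you should instead take $t_0$ to be the left endpoint of the connected component of $S$ containing $T$.
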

\begin{proof}
    Assume that there exists at some time $t>0$ a machine $m$, that has a strictly higher load than its neighboring machines. We show that this leads to a contradiction. Thus, for any machine, there always exists at least one neighbor with load at least as high as that of the machine itself.
    
    There exists a point in time before $t$, for example at the beginning, at which there is no such machine $m$. Let $t_0\ge 0$ be the supremum over all such times. For the load on $m$ to differ from the loads on its neighbors, a measurable amount of time must have passed between $t_0$ and $t$. Let this time interval be $I:=[t_0,t]$, with $t_0<t$. There must exist a subset $J\subseteq I$ of nonzero measure during which machine $m$ did not run. Else, if $m$ had run throughout the entire interval $I$ (except for a null set), its load would not have increased. Similarly, if the neighbors of $m$ had not run during $I$ (except for a null set), their loads would not have decreased. Hence, machine $m$ would not have more load than its neighbors at time $t$.

    By the intermediate value theorem, there exists a time $t'\in I$ at which $m$ has a higher load than its neighbors, but the difference (to the machine with the closest load) is strictly smaller than at time $t$. Furthermore, we can choose $t'\in J$, since the difference between the load on $m$ and that of its neighbors increases only during $J$.

    The fact that machine $m$ at time $t'$ has load no lower than that of all other machines, while its neighbors have strictly lower loads, but $m$ was not running, contradicts the definition of \Greedy. Since \Greedy prefers machines with higher load, $m$ would have been included in the independent set of machines selected to run.
\end{proof}

\section{Complete Graphs and Complete Partite Graphs}

For families of complete graphs and complete $k$-partite graphs, the \Greedy algorithm performs remarkably well. For complete graphs, \Greedy achieves a competitive ratio of $H_n-\frac{1}{2}$, which cannot be improved even by randomized algorithms. For general complete $k$-partite graphs, we recover almost tight bounds, confirming that \Greedy remains highly effective. However, for complete bipartite graphs, adapting to their specific structure is necessary to obtain tight results.
\begin{definition}\label{def:smoothclique}
If $X$ is a vertex set, then denote $N(X)=\bigcup_{v\in X} N(v)-X,$ where $N(v)$ is the set of neighbors of $v$. We say that $X$ is \emph{smooth} if all vertices in $X$ have the same neighbors outside $X$ that is, $N(v)-X=N(X)$ for $v\in X.$
\end{definition}

\begin{lemma}[Lemma 4.1 \cite{10.1007/3-540-48224-5_70}, Original model]
\label{lem:completecliqueflow}
Suppose $K\subseteq V$ is a smooth clique in G with $N(K)=L.$ Then \Greedy preserves
\begin{equation}\label{eq:completecliqueflow}
\sum_{i\in K} a_i\leq \sum_{i\in K} z_i+\lvert K\rvert\max_{i\in L} a_i.    
\end{equation}
(For $L=\emptyset$ we define $\max_\emptyset a_i=0.$)
\end{lemma}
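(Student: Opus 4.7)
The plan is to verify \eqref{eq:completecliqueflow} at $t=0$ (where both sides vanish since all $a_i=z_i=0$) and then show preservation over time. For preservation, I would split at each time $t$ according to \Greedy's action on $K$.

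\emph{Case 1: \Greedy runs some machine in $K$.} Since $K$ is a clique, \Greedy's total processing rate on $K$ equals $1$, while \Opt's total rate on $K$ is at most $1$, so $\sum_{i\in K}(a_i - z_i)$ is non-increasing. By smoothness, the running $K$-machine is adjacent to every $j\in L$, so no $j\in L$ can run concurrently under \Greedy; hence each $a_j$ for $j\in L$ is non-decreasing and so is $\max_{i\in L}a_i$. Thus the slack (right-hand side minus left-hand side of \eqref{eq:completecliqueflow}) is non-decreasing.

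\emph{Case 2: \Greedy runs no machine in $K$.} For any $i\in K$ with $a_i>0$ or $i$ receiving load, all of $i$'s conflicts lie in $K\cup L$, and since no $K$-machine is chosen, some $j\in L$ must have been selected earlier in \Greedy's priority order, giving $a_j\ge a_i$. Therefore $\max_{i\in L}a_i\ge\max_{i\in K}a_i$ and
$$
\sum_{i\in K}(a_i - z_i)\le\sum_{i\in K}a_i\le|K|\max_{i\in K}a_i\le|K|\max_{i\in L}a_i,
$$
so \eqref{eq:completecliqueflow} holds directly (with additional slack from $\sum_{i\in K}z_i$).

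Since Case 2 certifies the inequality pointwise and Case 1 cannot shrink the slack, continuity of $a_i$, $z_i$, and $\max_{i\in L}a_i$ in $t$ gives the invariant for every $t\ge 0$. The main obstacle I anticipate is formalising the case split in the flow model, where \Greedy can switch modes arbitrarily often and ties between $K$- and $L$-loads require consistent handling; since the functions involved are Lipschitz continuous in $t$, this reduces to checking a non-negative upper right derivative of the slack whenever the slack vanishes, which is precisely what the two cases deliver.
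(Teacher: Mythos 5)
Your overall strategy (treat the inequality as an invariant, check it at $t=0$, and argue preservation via a case split on \Greedy's behaviour on $K$) is the right one; note that the paper itself does not reprove this lemma but imports Lemma~4.1 of Chrobak et al.\ for the original model and asserts it carries over to the flow model. Your Case~2 is sound: by smoothness every loaded (or load-receiving) machine of $K$ can only be blocked by a selected $j\in L$ with $a_j\ge a_i$, so $\sum_{i\in K}a_i\le |K|\max_{i\in L}a_i$ and the inequality holds outright at such instants (the degenerate case $L=\emptyset$ should be mentioned, but is harmless since then no machine of $K$ can be blocked).

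The genuine gap is in Case~1 and in the boundary argument you build on it. In the flow model \Greedy resolves ties by running the tied machines at equal speeds, and a machine of $K$ can be tied with a machine of $L$; since they are adjacent, the tie is resolved by time-sharing. At such an instant \Greedy does ``run some machine in $K$'', yet its total rate on $K$ is strictly below $1$ while a machine of $L$ --- possibly one attaining $\max_{i\in L}a_i$ --- is processed simultaneously, so both claims of your Case~1 (``rate on $K$ equals $1$'', ``no $j\in L$ runs concurrently'') fail. These mixed instants are not exceptional: by Lemma~\ref{lem:greedy} the top load is generically tied, and in the application to $K_n$ (Theorem~\ref{thm:FKnUB}) $K$ is a sub-clique of the tied top machines, so the typical instant is neither of your two cases and the claimed non-negative right derivative of the slack at slack-zero points is not established. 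The standard fix is to condition on the sign of the slack rather than on \Greedy's action: if the slack is negative, then $\sum_{i\in K}a_i>|K|\max_{i\in L}a_i$, hence $\max_{i\in K}a_i>\max_{i\in L}a_i\ge0$, so the highest-loaded machines of $K$ strictly dominate every machine of $L$; then in every tie-alternative \Greedy selects a machine of $K$ with positive load at full rate and all of $L$ is blocked, so $\sum_{i\in K}(a_i-z_i)$ is non-increasing and $\max_{i\in L}a_i$ is non-decreasing wherever the slack is negative. Combined with continuity of the loads and the slack being non-negative at time $0$, a last-crossing argument shows the slack can never become negative. (Alternatively, one can check directly that at tight configurations the equal-speed tie rule gives $K$ total speed at least $|K|$ times the processing speed of any $L$-maximizer, but the sign-based argument is cleaner.)
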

\begin{theorem}\label{thm:FKnUB}
    \textsc{Greedy} is $(H_n-1/2)$-competitive for $K_n$ for $n\geq2$. 
\end{theorem}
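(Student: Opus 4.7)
The plan is to combine Lemma~\ref{lem:completecliqueflow}, applied to the top-$k$ most loaded machines for each $k$, with the flow-model refinement given by Lemma~\ref{lem:greedy}; the latter is what will save the extra $1/2$ over the classical $H_n$ bound from the original model.

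First, I would fix an arbitrary time $t$ and relabel the machines so that $a_1 \geq a_2 \geq \cdots \geq a_n$. In $K_n$, every vertex subset $K$ is a smooth clique with $N(K) = V \setminus K$, so Lemma~\ref{lem:completecliqueflow} applied to $K = \{1, \ldots, k\}$, together with $z_i \leq 1$, yields, for each $k \in \{1, \ldots, n\}$,
\[
\sum_{i=1}^{k} a_i \;\leq\; k + k\, a_{k+1},
\]
where $a_{n+1} := 0$. Introducing the gaps $b_i := a_i - a_{i+1} \geq 0$, Abel summation rewrites this family of inequalities as $\sum_{i=1}^{k} i\, b_i \leq k$ for every $k$.

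Next I would argue that $b_1 = 0$, which is the step exploiting the flow model. If $a_1 = 0$ the claim is immediate; otherwise some machine carries positive load, and since $K_n$ admits only single-machine operation, \Greedy is necessarily running at time $t$. Then Lemma~\ref{lem:greedy} forbids a unique maximum, so $a_1 = a_2$ and hence $b_1 = 0$.

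It remains to bound $a_1 = \sum_{i=1}^{n} b_i$ subject to $b_1 = 0$, $b_i \geq 0$, and $\sum_{i=1}^{k} i\, b_i \leq k$ for all $k$. The greedy assignment $b_2 = 1$ and $b_i = 1/i$ for $i \geq 3$ saturates every constraint, and its optimality can be certified by the dual multipliers $y_k = 1/(k(k+1))$ for $k < n$ together with $y_n = 1/n$. Both values equal
\[
1 + \sum_{i=3}^{n} \frac{1}{i} \;=\; H_n - \frac{1}{2},
\]
giving the claimed bound on $a_1$ and thus on \Greedy's buffer size. The only conceptually non-routine step in the plan is justifying $b_1 = 0$, where the flow-model structure enters; everything else is a short LP calculation.
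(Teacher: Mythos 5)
Your proposal is correct and is essentially the paper's proof: the same prefix-clique inequalities $\sum_{i\le k}a_i\le k+k\,a_{k+1}$ from Lemma~\ref{lem:completecliqueflow}, the same use of Lemma~\ref{lem:greedy} to force $a_1=a_2$, and the same multipliers $1/(k(k+1))$ for $2\le k<n$ and $1/n$ for $k=n$, merely repackaged as an LP in the gap variables $b_i$ with a duality certificate (just make explicit that the multiplier for the $k=1$ constraint is $0$, as your final computation assumes). No gap.
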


\begin{proof} 
    This follows the proof of the analogous statement in \cite{10.1007/3-540-48224-5_70} for the original model.
    
    We order the machines so that $a_1\geq a_2\geq\ldots\geq a_n.$
    By Lemma \ref{lem:greedy} \Greedy maintains that the load of the two most loaded machines is balanced, so $a_1=a_2$.\footnote{This is the reason why we use $2\leq j$ in the proof.}

    By Lemma~\ref{lem:completecliqueflow} which also applies to the flow model, \Greedy maintains $\sum_K a\leq \sum_K z+\lvert K\rvert\max_L a$ for any choice of $K\subseteq K_n.$ With $a_1\geq a_2\geq \ldots\geq a_n$ and $z_i\leq 1$ for $i\in\{1,\ldots,n\}$, We get that 
    \begin{align*}
         \sum_{i\leq j}a_i&\leq j+j a_{j+1}
    \end{align*}
    for each $j\in\{2,\ldots,n\}$, where we set $a_{n+1}=0.$ Multiplying the $j$-th inequality with $1/(j(j+1))$ for $2\leq j< n$, multiplying the $n$-th inequality with $1/n$ and then adding them all together, we get
    \begin{align*}
         \frac{1}{2}(a_1+a_2)\leq1+\frac{1}{3}+\ldots+\frac{1}{n}=H_n-1/2.
    \end{align*}
    Finally, with $a_1=a_2$ we have $a_1\leq H_n-1/2.$
\end{proof}
\begin{theorem}\label{thm:FKnLB}
    No randomized online algorithm can be better than $(H_n-1/2)$-competitive for $K_n$ with $n\geq2.$
\end{theorem}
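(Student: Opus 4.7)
The plan is to apply Yao's minimax principle: I would exhibit a probability distribution $\mathcal{D}$ over valid flow-model inputs with $\Opt(I)=1$ for every $I$ in its support, and argue that any deterministic online algorithm has expected peak buffer size at least $H_n-1/2$ on $I\sim\mathcal{D}$. This suffices for the stated bound against any randomised online algorithm facing an oblivious adversary. The distribution would be indexed by a uniformly random permutation $\pi$ of $\{1,\ldots,n\}$ together with a sequence of phase lengths $\ell_1,\ldots,\ell_{n-1}$ to be tuned. In phase $k$ the $n-k+1$ machines $\pi(k),\pi(k+1),\ldots,\pi(n)$ receive load at rate~$1$ while the remaining machines are idle, and at the end of phase $k$ machine $\pi(k)$ ceases to receive load.

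For the offline bound I would exhibit the schedule that runs $\pi(k)$ throughout phase $k$; this is available to \Opt since it knows $\pi$ in advance. The schedule keeps $\pi(k)$'s load flat at the cumulative inflow of the previous phases, and with the $\ell_k$ tuned so that the accumulated inflow on the eventual survivor $\pi(n)$ reaches exactly $1$ at the end of phase $n-1$ (and no machine ever exceeds this bound earlier), one obtains $\Opt(I)=1$ on every realisation.

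The online bound is the hard part. By Yao's principle it suffices to handle deterministic online algorithms, and by symmetry of $\pi$ we may further pass to a symmetric algorithm for which the expected load on each currently-active machine is the same at every time. Writing $b_k$ for this common expected load at the end of phase $k$, the $K_n$ constraint that only one machine can run at a time yields a balanced-play recurrence of the form $b_k = b_{k-1} + \ell_k(n-k)/(n-k+1)$. A direct analysis of this recurrence alone only gives $b_{n-1}\le(n-1)/n$, which is too weak; the tight lower bound comes from observing that each random removal wastes, in expectation, a $1/(n-k+1)$ fraction of the online algorithm's phase-$k$ processing on a machine that is about to disappear, and these harmonic terms telescope to $H_n$. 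The further $-1/2$ improvement (relative to the original-model bound $H_n$) is obtained by invoking Lemma~\ref{lem:greedy}: in the flow model the online algorithm's maximum is always shared by at least two machines, so the adversary cannot force a single buffer to exceed \Opt's by an extra $1/2$ via an instantaneous block as in the original model.

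The main obstacle will be the precise tuning of the phase lengths $\ell_k$ so that both $\Opt=1$ holds on every realisation \emph{and} the harmonic telescope on the online side equals exactly $H_n-1/2$, rather than merely some $\Omega(\log n)$ quantity. A secondary difficulty is bounding the expected \emph{peak} rather than merely the expected active-set average: this requires a coupling/convexity argument over $\pi$ showing that deviations from balanced play (for example, processing recently-removed machines to reduce their leftover load) can only increase the expected maximum, so that the symmetric-algorithm reduction is without loss.
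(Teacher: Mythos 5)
There is a genuine gap, and it sits exactly at the quantitative core that you defer. Your shrinking-set construction has no mechanism that produces the decisive final value. The paper's proof does not go through Yao at all: since the randomized algorithm is fixed, its expected loads are determined by the input prefix, so an oblivious adversary may react to them. In phase $i$ it sends flow for one time unit on all but the $i-1$ machines of \Alg with least expected load, then waits $n-i-1$ time units so that \Opt ends the phase with load $1$ on the dropped machine and load $0$ on every machine that will still receive load; an induction then gives that after phase $i$ the $n-i$ most loaded machines of \Alg carry expected load at least $L_i=(n-i)(H_n-H_{n-i})$. The crucial last phase sends flow for \emph{two} time units on the \emph{two} machines of \Alg with highest expected load: four units arrive, at most two can be processed, so their combined expected load is at least $L_{n-2}+2$ and one machine has expected load at least $L_{n-2}/2+1=H_n-1/2$, while \Opt finishes at $1$ by time-sharing. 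Your input (total flow time $\sum_k\ell_k=1$, one machine leaving the active set per phase, no waiting, no final double-load phase) cannot reach this bound; already for $n=2$ it forces an expected maximum of only $1/2$ (two units arrive in one time unit, the algorithm processes one and balances the rest), whereas $H_2-1/2=1$.

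The step you offer for the constant is also not valid as stated. Lemma~\ref{lem:greedy} is a structural property of the \Greedy algorithm; in a lower bound you must argue against arbitrary online algorithms, which are free to keep a unique most-loaded machine, so the lemma cannot be "invoked" to adjust the bound. Moreover, an intermediate lower bound of $H_n$ in the flow model is impossible, since \Greedy is $(H_n-1/2)$-competitive there (Theorem~\ref{thm:FKnUB}); so a plan of "telescope to $H_n$, then subtract $1/2$" cannot be repaired — the $H_n-1/2$ must come directly from the construction, as it does in the paper via the final two-unit phase. Finally, the symmetrization/coupling reduction you flag as a secondary difficulty is exactly what the paper's adversary avoids by selecting machines according to the largest/smallest \emph{expected} loads of the fixed algorithm; if you insist on the Yao/random-permutation route you would still need to add waiting periods (so \Opt can empty between phases) and the final two-unit phase, at which point you have essentially reconstructed the paper's argument.
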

\begin{proof} 
    The adversarial sequence consists of $n-1$ phases. At the start of phase $i=1,\dots, n-2$, the adversary sends flow for one time unit on all except the $i-1$ machines of \Alg with the least expected load.  
    After this, nothing arrives for $n-i-1$ time, and then the next phase starts. The last phase $n-1$ consists of sending flow for two time units on the two machines of \Alg with the highest expected load.

    In each phase, \Opt runs all machines that receive load at total speed $1$ except the machine of \Alg that has the lowest expected load after this phase.  After each phase $i$ \Opt has load $0$ on all machines that will receive load in the remaining input and load $1$ on the machines that will not receive any more load. At the end of the construction \Opt has load $1$ on all machines.
    
    We show by induction that at the end of phase $i\leq n-2$ the expected load on the $n-i$ most loaded machines of \Alg is at least $L_i=(n-i)\sum_{j=1}^i \frac{1}{n-j+1}=(n-i)(H_n-H_{n-i}).$ We use $i=0$ as our base case, as at the start of the input we have $L_0=0$ load on all $n$ machines. 
    
     In phase $i+1$ the load on each of the $n-i$ machines with highest expected load at the start of this phase increases by $1$, so at the end of phase $i+1$ the expected load on the (then) $n-(i+1)$ machines with highest expected load is at least $L_{i+1}=\frac{n-(i+1)}{n-i}(L_i+1)$ as the machine among these $n-i$ machines with lowest expected load has expected load at most $\frac{L_i+1}{n-i}.$ 
    
     Using our induction hypothesis, we get
    \begin{align*}
        L_{i+1}&=\frac{n-(i+1)}{n-i}(L_i+1)=\frac{n-(i+1)}{n-i}((n-i)(H_n-H_{n-i})+1)\\
        &=(n-(i+1))(H_n-H_{n-i}+\frac{1}{n-i})=(n-(i+1))(H_n-H_{n-(i+1)}).
    \end{align*}
    
    Then after phase $n-1$ the two machines of \Alg with highest expected load have combined load at least $L_{n-2}+2$, so there is one machine with expected load at least $L_{n-2}/2+1.$

    At the end, there is a machine of \Alg with expected load at least $L_{n-2}/2+1=H_n-1/2.$
\end{proof}

If one independent set consists of a single vertex, a competitive ratio of 1 can be achieved. 
Else, \Greedy achieves a compe\-ti\-tive ratio of $3/2$, which is optimal even for randomized algorithms.

\begin{algorithm}[H]
\label{alg:K1n-1}
\KwData{$K_{1,n-1}=G(V_\ell\cup V_r,E)$ complete bipartite graph with $\vert V_\ell\vert=1$,$V_\ell=\{m_1\}$ and $V_r=\{m_2,\ldots,m_n\}$.}
\eIf{$a_1=0$ or (there exists an $i\in\{2,\ldots,n\}$ such that $a_i=1$ and the machine is receiving load)}{run $m_2,\ldots,m_n$}
{run $m_1$}
  \caption{\Prioleftf}
\end{algorithm}

\begin{theorem}\label{thm:fUBK1n-1}
Algorithm \Prioleftf is $1$-competitive for $K_{1,n-1}.$
\end{theorem}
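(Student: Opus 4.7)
The plan is to establish $a_i(t)\le 1$ for every machine $m_i$ and every time $t$, which under the assumption $\Opt=1$ yields a competitive ratio of 1.

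Bounding the leaves is direct from the algorithm's rule. For any leaf $m_k$, the instant $a_k$ reaches $1$ while $m_k$ receives load, the second trigger of \Prioleftf fires and the algorithm runs $m_2,\ldots,m_n$; then $m_k$ is processed at rate $1$ and receives at rate $1$, so $a_k$ stays at $1$. Thus $a_k(t)\le 1$ for all $t$.

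Bounding the center $a_1$ is the main work. The load $a_1$ only grows when the algorithm runs the leaves while $m_1$ simultaneously receives load. Inspecting the rule, this happens either with $a_1=0$ (a transient case in the flow model: the instant $a_1>0$, the algorithm switches back to $m_1$, so $a_1$ effectively stays at $0$) or with some leaf $m_k$ at $a_k=1$ with incoming load. I would focus on the second situation, and call a maximal time interval $(t_0,t_1]$ on which it holds together with $m_1$ receiving a \emph{phase}. Over such a phase, $a_1(t_1)=a_1(t_0)+(t_1-t_0)$, and the goal is to show this is at most $1$.

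The technical core is to establish by induction on time the invariant
\[
a_1(t)+a_k(t)\le z_1(t)+z_k(t) \qquad \text{for every leaf } k.
\]
By continuity plus the inductive hypothesis, at the start of a phase triggered by leaf $m_k$ we get $a_1(t_0)+1\le z_1(t_0)+z_k(t_0)$, since $a_k(t_0)=1$. During the phase $\dot a_1+\dot a_k=1$ (as $\dot a_1=1$ because $m_1$ receives and the algorithm does not process it, while $\dot a_k=0$ since $m_k$ is processed and receives both at rate $1$), while a short case analysis on OPT shows $\dot z_1+\dot z_k\ge 1$: if OPT runs the leaves then $(\dot z_1,\dot z_k)=(1,0)$, if OPT runs $m_1$ then $(0,1)$, and if OPT is idle then $(1,1)$. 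Hence the invariant is preserved throughout the phase. Using $a_k(t)=1$ and $z_1(t),z_k(t)\le 1$ during the phase, this gives $a_1(t)\le z_1(t)+z_k(t)-1\le 1$. Between phases the algorithm runs $m_1$, so $a_1$ is non-increasing and the bound persists.

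The main obstacle is maintaining the invariant for a bystander leaf $j\ne k$ during a phase triggered by $m_k$: pointwise the rate of $(a_1+a_j)-(z_1+z_j)$ can be positive, because OPT may process $m_1$ while $m_j$ is quiescent on both sides. The remedy is a global time-budget argument within the phase: the constraint $z_k(t)\le 1$ together with $m_k$ receiving throughout caps OPT's total time running $m_1$ during the phase by $1-z_k(t_0)$, and combining this with the invariant for $k$ at $t_0$ absorbs the rate excess exactly, keeping the invariant for $j$ valid at $t_1$.
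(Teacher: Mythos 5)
Your proposal is correct and follows essentially the same route as the paper: the leaves are bounded by $1$ directly by the algorithm's rule, and the key invariant $a_1+a_i\le z_1+z_i$ (i.e., $d_1+d_i\le 0$) for every leaf $m_i$ is maintained over time and then used to conclude $a_1\le 1$. Your phase/time-budget step (using $z_k\le 1$ to cap \Opt's time on $m_1$ by $1-z_k(t_0)\le -d_1(t_0)$) is just an unrolled version of the paper's observation that $d_k\ge 0$ together with the maintained invariant $d_1+d_k\le 0$ forces $d_1\le 0$ throughout such an interval, and your bystander-leaf case is closed exactly as in the paper by noting that $d_j\le 0$ whenever $a_j=0$ and $m_j$ receives no load (rather than by an exact rate-excess accounting, which as literally stated would miss the harmless increases of $d_j$ while \Opt drains leftover load on $m_j$).
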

\begin{proof}
By construction we have $a_i\le1$ for $i\in\{2,\ldots,n\}$ at all times. We show that the invariant $d_1+d_i\le 0$ is maintained for all $i\in\{2,\ldots,n\}$. This is true at the start and whenever $a_j=0$ for all $j\in\{1,\ldots,n\}$. The only time where $d_1+d_i$ could increase if neither $m_1$ nor $m_i$ with $i\in\{2,\ldots,n\}$ run with speed 1. This only happens if $a_i=0$, it does not receive load and there exists a $k\in\{2,\ldots,n\}$ with $a_k=1$ and the machine is receiving load.

Assume that the invariant holds at the start of an interval where such a $k$ exists. With $d_k\ge 0$ we have $d_1\le 0$ at the start of such an interval. This continues to hold as long as there exists a $k\in\{2,\ldots,n\}$ with $a_k=1$ and it receives load. 
During such an interval, while $a_i=0$ and the machine is not receiving load we have $d_i\le 0$ and by $d_1\le 0$ it follows that $d_1+d_i\le 0.$ Note that by the above, if $a_i>0$ or it is receiving load the invariant continues to hold, as the machine runs. Hence, the invariant $d_1+d_i\le 0$ is maintained for all $i\in\{2,\ldots,n\}$ throughout such an interval. Because the invariant is true at the start, it follows that the invariant is maintained at all times.

It remains to show that $a_1\le 1.$ The load $a_1$ could only increase above $1$ if there exists an $i\in\{2,\ldots,n\}$ such that $a_i=1$ and the machine is receiving load, so especially $d_i\ge 0$. By $d_1+d_i\le 0$ it follows $d_1\le 0$, so $a_1\le z_1\le 1.$
\end{proof}

We note that \Prioleftf on $P_3$ is different to the algorithm presented by Höhne and van Stee in \cite{DBLP:conf/faw/HohneS20}, which also achieves $1$-competitiveness. %

\begin{lemma}[Lemma 4.2 \cite{10.1007/3-540-48224-5_70}, Original model]\label{lem:Skpartitedelay}
Suppose that $X\subseteq V$ is a smooth and complete $k$-partite subgraph of $G$ with partite vertex sets $V_1,\ldots, V_k$ and $N(X)=L$. Then \Greedy maintains the invariant
\begin{equation}\label{eq:Skpartitedelay}
    \sum\limits_{i=1}^k\max_{j\in V_i} [a_j-z_j]^+\le \max \{k\max_{i\in L} a_i,k-1\}.
\end{equation} 
\end{lemma}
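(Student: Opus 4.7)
The plan is to adapt the original Chrobak et al. argument from the discrete model to the flow model, using a continuous ``first violation'' argument that leverages the balancing guarantee of Lemma~\ref{lem:greedy}. Set $\Phi(t) := \sum_{i=1}^{k}\max_{j\in V_i}[a_j^t-z_j^t]^+$ and $\Psi(t) := \max\{k\max_{\ell\in L}a_\ell^t,\ k-1\}$. Both $\Phi$ and $\Psi$ are continuous in $t$, $\Phi(0)=0\le k-1\le\Psi(0)$, and I assume for contradiction that $t^*$ is the infimum of times with $\Phi>\Psi$; by continuity $\Phi(t^*)=\Psi(t^*)$ and the right-upper-derivative satisfies $D^+(\Phi-\Psi)(t^*)>0$. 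The goal is to show that this is impossible.

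The structural backbone is the observation that $X$ is complete $k$-partite, so every independent set inside $X$ lies within a single part; hence Greedy's running set intersected with $X$ lies in a single class $V_{i^\star}(t)$, and by smoothness every vertex of $L=N(X)$ is adjacent to every vertex of $X$, so no $L$-vertex runs while any $X$-vertex does. For each $i$ I fix a representative $j_i(t)\in\argmin_{j\in V_i}(z_j-a_j)$ (equivalently, an argmax of $[a_j-z_j]^+$); the right-upper-derivative of the $i$-th summand equals $\dot a_{j_i}-\dot z_{j_i}$ on the current argmax, which is positive only if $j_i$ is not being processed by Greedy at full speed while either receiving load or being processed by Opt.

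At $t^*$ I split into two regimes. \emph{Regime A:} $\max_\ell a_\ell>(k-1)/k$, so $\Psi=k\max_\ell a_\ell$. If Greedy runs an $L$-maximizer $\ell^\star$, then all of $X$ is blocked; by Greedy's priority $a_{j_i}\le a_{\ell^\star}$ for every $i$, hence $\Phi\le k a_{\ell^\star}=\Psi$, and a rate comparison matches $D^+\Phi$ against $D^+\Psi$. If Greedy does not run $\ell^\star$, it runs a machine of load $\ge a_{\ell^\star}$ in some $V_{i^\star}$, and Lemma~\ref{lem:greedy} balances loads in that part, controlling the growth of the other summands. \emph{Regime B:} $\max_\ell a_\ell\le(k-1)/k$, so $\Psi=k-1$ and I must show $D^+\Phi\le0$. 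The equality $\Phi(t^*)=k-1$ forces some $V_i$ to contribute strictly more than $(k-1)/k$, so $a_{j_i}>(k-1)/k\ge\max_\ell a_\ell$; Greedy's priority then forces $j_i$ to run, so $i=i^\star$, and Lemma~\ref{lem:greedy} prevents $a_{j_{i'}}$ for $i'\ne i^\star$ from strictly exceeding $a_{j_{i^\star}}$, capping the total growth of $\Phi$.

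The main obstacle will be the fine continuous-time derivative book-keeping at $t^*$: handling (i) the transition of $\Psi$ between its two arguments when $\max_\ell a_\ell=(k-1)/k$ (both expressions agree there, so the outer maximum remains continuous and the rate comparison carries over), (ii) the fact that the argmax $j_i$ inside $V_i$ can change with $t$, which I handle by the standard envelope identity that the right-upper-derivative of a finite pointwise maximum equals the maximum of the right-upper-derivatives over the current argmaxes, and (iii) ties between machines of comparable load, which Lemma~\ref{lem:greedy} resolves by equal-speed processing. The combinatorial heart of the argument is identical to Chrobak et al.'s original proof; only the event-based book-keeping needs to be replaced by rate calculations, made rigorous by the balancing property of Greedy in the flow model.
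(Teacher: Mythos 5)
The paper itself contains no proof of this lemma: it is imported verbatim from Chrobak et al.~\cite{10.1007/3-540-48224-5_70} (their Lemma~4.2) for the original model, and later merely asserted to ``also apply to the flow model'', the implicit justification being the discretization relationship between the two models described in the preliminaries. So your continuous-time first-violation argument is necessarily a different route --- but as written it does not prove the statement you were given. The lemma is stated for the \emph{original} model, while your argument uses continuity of the loads and, crucially, Lemma~\ref{lem:greedy}, which the paper explicitly notes fails in the original model. If instead your target is the flow-model version that the paper actually needs, the cheapest correct route is the one the paper implicitly takes: the invariant holds in the original model by Chrobak et al., and any flow input (and \Greedy's behaviour on it) is a limit of original-model inputs with block sizes tending to zero, so the closed invariant (\ref{eq:Skpartitedelay}) carries over.

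Even judged as a direct proof of the flow-model version, the combinatorial core has genuine gaps. In Regime~A, subcase ``\Greedy does not run $\ell^\star$'', you conclude that \Greedy runs a machine of load $\ge a_{\ell^\star}$ in some part $V_{i^\star}$; but $\ell^\star\in L$ may have neighbours \emph{outside} $X\cup L$ and may be blocked by such a machine, in which case no machine of $X$ with load $\ge a_{\ell^\star}$ need be running and the claimed control of the remaining summands never gets started. In Regime~B, from a summand of size at least $(k-1)/k$ (pigeonhole gives only $\ge$, not $>$) you infer that the corresponding $j_i$ ``must run by \Greedy's priority''. That inference fails twice: \Greedy selects by \emph{load}, not by delay, so the delay-maximizer $j_i$ of $V_i$ need not be the load-maximizer of its part; and a machine of another part of $X$ (or of $L$, in case of ties) with load $\ge a_{j_i}$ may be selected first and block $j_i$ entirely. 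Consequently \Opt can be processing $j_i$ while \Greedy processes a different machine, so that summand is still increasing, and bounding the \emph{loads} $a_{j_{i'}}$ via Lemma~\ref{lem:greedy} does not bound the growth of $\sum_i\max_{j\in V_i}[a_j-z_j]^+$, which is driven by delays. These are precisely the points from which the slack term $k-1$ has to be extracted; asserting that ``the combinatorial heart is identical to Chrobak et al.'' does not discharge them, since their event-based case analysis is exactly what would have to be redone in your rate-based setting.
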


We next give an upper bound on the performance of \Greedy for bipartite graphs, which we later generalize to $k$-partite graphs in Theorem~\ref{thm:FcompletekpartiteUB}.

\begin{theorem}\label{thm:FcompletebipartiteUB}
   \Greedy is $3/2$-competitive on complete bipartite graphs.
\end{theorem}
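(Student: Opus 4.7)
The plan is to show that whenever the maximum online load exceeds $3/2$, the two structural invariants already available to us force a contradiction. First, I would apply Lemma~\ref{lem:Skpartitedelay} with $X := V = V_\ell \cup V_r$, the whole vertex set of $K_{k,n-k}$. Since $X$ equals all of $V$, the outside neighbourhood $L = N(X)$ is empty and the right-hand side of \eqref{eq:Skpartitedelay} reduces to $\max\{0,\, k-1\} = 1$ (with $k=2$ partite sets). Writing $D_\ell := \max_{j \in V_\ell}[a_j - z_j]^+$ and $D_r := \max_{j \in V_r}[a_j - z_j]^+$, this yields the global invariant $D_\ell + D_r \le 1$ at every time.

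Second, I would extract from Lemma~\ref{lem:greedy} the fact that any maximum-load machine has a neighbour of exactly the same load. Its stated form only forbids a unique maximum, but the proof actually establishes the sharper claim that every machine has at least one neighbour whose load is at least as high; applying this to a max-load machine $m$ produces a neighbour $m^\star$ with $a_{m^\star} \ge a^*$, so $a_{m^\star} = a^*$ as well. In $K_{k,n-k}$ all neighbours cross the partition, hence the overall maximum $a^* := \max_i a_i$ is attained simultaneously by some $m \in V_\ell$ and some $m^\star \in V_r$.

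Now suppose for contradiction that $a^* > 3/2$ at some time. Since $z_i \le 1$ for every machine, we get $d_m = a^* - z_m > 3/2 - 1 = 1/2$, and analogously $d_{m^\star} > 1/2$. Therefore $D_\ell \ge d_m > 1/2$ and $D_r \ge d_{m^\star} > 1/2$, which gives $D_\ell + D_r > 1$ and contradicts the invariant from step one. Hence $a^* \le 3/2$ at all times, and \Greedy is $3/2$-competitive on every complete bipartite graph $K_{k,n-k}$.

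The main delicate point I expect is the appeal to Lemma~\ref{lem:greedy}: on its own the ``no unique maximum'' statement does not yet place the second max-load machine on the opposite side of the bipartition (two tied maxima in $V_\ell$ would leave $D_r$ unconstrained). The fix is to use the slightly stronger conclusion actually produced in the proof of Lemma~\ref{lem:greedy}, namely that each machine has a neighbour of at least its own load; in a complete bipartite graph this immediately forces cross-partition maxima, and Lemma~\ref{lem:Skpartitedelay} then closes the argument. Everything else is just arithmetic, so the entire proof should fit into a few lines.
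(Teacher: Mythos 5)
Your proposal is correct and follows essentially the same route as the paper: invoke Lemma~\ref{lem:Skpartitedelay} with $X=V$, $k=2$, $L=\emptyset$ to get the invariant bounding the sum of the two partition-wise maximal delays by $1$, and use the strengthened conclusion from the proof of Lemma~\ref{lem:greedy} (every machine has a neighbour with load at least as high, which in a complete bipartite graph forces the overall maximum to be attained on both sides) to conclude $\max a_i\le 3/2$. The only cosmetic difference is that you argue by contradiction ($d>1/2$ on both sides) while the paper adds the two loads directly to get $2\max a_i\le 3$; your remark about needing the cross-partition form of Lemma~\ref{lem:greedy} matches what the paper uses implicitly.
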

\begin{proof}
    Let $G(V_1\cup V_2,E)$ be a complete bipartite graph. By Lemma~\ref{lem:greedy} and the structure of complete biparite graphs, \Greedy maintains $\max_{m_\ell\in V_1} a_\ell=\max_{m_r\in V_2} a_r$. 
    Lemma \ref{lem:Skpartitedelay} also applies to the flow model. With $k=2$ and $N(V_1\cup V_2)=L=\emptyset$ the invariant $\max_{m_\ell\in V_1} d_\ell +\max_{m_r\in V_2} d_r\le 1$ is maintained. This implies $d_\ell+d_v\le1$ for any  pair $(m_\ell,m_v)\in V_1\times V_2.$ So $a_\ell-z_\ell+a_r-z_r\le1\Leftrightarrow a_\ell+a_r\le 3.$ With the above it follows $\max_{m_\ell\in V_1} a_\ell+\max_{m_r\in V_2}\le2\max_{m_\ell\in V_1} a_\ell\le 3,$ so $\max_{m_\ell\in V_1}a_\ell\le 3/2.$ Hence $a_\ell\le 3/2$ for all $m_\ell\in V_1.$ Analogously $a_r\le 3/2$ for all $r\in V_2$ follows.
\end{proof}

 \begin{theorem}\label{thm:FC4LB}
   No randomized online algorithm can be better than $3/2$-compe\-ti\-tive for $C_4$.
\end{theorem}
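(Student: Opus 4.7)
The plan is to apply Yao's minimax principle. I would exhibit a distribution $\mathcal{D}$ over input sequences such that every input in the support satisfies $\Opt \le 1$, and such that any deterministic online algorithm has expected output at least $3/2$ on $\mathcal{D}$; this will imply the claimed randomized lower bound. The construction leverages the symmetry between the two maximum independent sets $V_1=\{v_1,v_3\}$ and $V_2=\{v_2,v_4\}$ of $C_4$: two scenarios agree over a common initial segment during which the online algorithm cannot distinguish them, and diverge based on a uniformly random bit.

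Concretely, I would let the common segment send flow to all four machines for one time unit. Because $C_4$ has maximum independent set size $2$, the speed constraints $x_u+x_v\le 1$ on edges give total processing rate at most $2$, so at the end of this segment every algorithm satisfies $\sum_i a_i \ge 2$. A random bit then selects $b\in\{1,2\}$ and load continues to arrive on $V_b$ for a long time. The offline optimum can attain max $= 1$ in each scenario by running $V_b$ throughout the common segment (keeping those machines drained) while tolerating that the other pair sits at load $1$.

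The crux is to lower-bound the online algorithm's maximum in expectation over the two scenarios. Let $\sigma=(a_1,a_2,a_3,a_4)$ be the online state at the end of the common segment, which is deterministic since the two scenarios are indistinguishable there. During the second segment in scenario $b$, machines in $V_b$ receive load while those in $V_{3-b}$ do not; by the conflict constraints, keeping the machines in $V_b$ bounded by $M$ forces average speeds near $1$ on $V_b$, hence speeds near $0$ on $V_{3-b}$, so $\max_{u\in V_{3-b}} a_u$ cannot decrease below its value in $\sigma$. I would use this to derive an inequality of the form $M_A(\sigma)+M_B(\sigma)\ge 3$, combining the sum bound $\sum a_i\ge 2$, the edge-speed constraints of the flow model, and the non-decrease of loads on non-processed non-receiving machines.

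The main obstacle is that the natural online hedging strategy (alternating between $V_1$ and $V_2$ in the common segment) produces the balanced state $\sigma=(\tfrac12,\tfrac12,\tfrac12,\tfrac12)$, which a naive analysis would allow to match $\Opt$ in either scenario for ratio $1$. Overcoming this requires either extending the divergent segment beyond what $\Opt$'s preprocessing can absorb on a balanced state, or appending a third segment returning load to all four machines in which $\Opt$'s asymmetric (scenario-specific) state preempts what the balanced online state cannot. Getting the phase lengths and the adaptive stopping rule exactly right so that $\Opt \le 1$ is preserved in every realization, while forcing the sum-of-maxima inequality $M_A(\sigma)+M_B(\sigma)\ge 3$ to hold for every achievable $\sigma$, is the technical heart of the argument.
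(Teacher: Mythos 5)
There is a genuine gap. Your construction sends the diverging load on an independent set $V_b$, but load arriving on an independent set is harmless in the flow model: the online algorithm can run both machines of $V_b$ simultaneously at speed $1$, so nothing accumulates there and the loads on $V_{3-b}$ merely stay put. Consequently the inequality you aim for, $M_A(\sigma)+M_B(\sigma)\ge 3$, is simply false: the hedging algorithm that splits the common segment evenly ends at $\sigma=(\tfrac12,\tfrac12,\tfrac12,\tfrac12)$, never exceeds load $\tfrac12$ on any machine in either scenario, and thus beats $\Opt=1$ outright on your distribution. You acknowledge this obstacle and defer its resolution (``extend the divergent segment'' or ``append a third segment'') to ``the technical heart of the argument,'' but that heart is exactly what is missing: neither variant is worked out, and it is unclear that either can work while keeping $\Opt\le1$, since any later load placed on an independent set can again be absorbed at full speed by the online algorithm.

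The paper's proof uses a different, and essentially necessary, mechanism: load is repeatedly concentrated on an \emph{adjacent} pair (an edge of $C_4$), where the conflict constraint caps the joint processing rate at $1$. The adversary runs many phases; in each phase it adaptively sends one time unit of flow to the machine with highest expected load in $V_1$ and to the one in $V_2$ (these are adjacent), then one time unit to the other machine of $V_2$, arranged so that \Opt returns to the all-empty state after every phase while the expected joint load on some adjacent pair increases by $\tfrac12-(x+y)/2$ per phase. This drives the joint expected load on some edge above $1-\eps$ at a moment when \Opt is empty; a final two-time-unit flow on both endpoints of that edge (which \Opt handles within buffer $1$ by serving the endpoints one after the other) then pushes the pair's joint expected load above $3-\eps$, yielding a machine with expected load above $3/2-\eps/2$. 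Your two-scenario Yao setup lacks both ingredients: the adaptive iteration that accumulates stuck load on an edge while \Opt resets, and a final attack along an edge rather than along an independent set. (Using Yao's principle instead of expected loads against an adaptive adversary is not itself the problem; the problem is where the load is placed.)
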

\begin{proof}
    Let $V_1=\{m_1,m_3\}$ and $V_2=\{m_2,m_4\}$ be the two bipartite vertex sets of $C_4.$ Consider the input which consists of $n$ phases. In each phase send flow for one time unit on the two machines with highest expected load in $V_1$ and $V_2$; these machines are adjacent. Let these machines be $m_\ell\in V_1$ and $m_r\in V_2$. By symmetry we may assume $a_\ell\ge a_r$ after one time unit. In each phase the input continues with flow for one time unit on the other machine in $V_2.$ Precisely, flow arrives on machine $m_r'$ such that $\{m_r'\}= V_2\setminus \{m_r\}.$
    
    We first observe that \Opt can empty all machines at the end of each phase by working on $V_\ell$ for one time unit followed by working on $V_r$ for one time unit.
    
    We let $a_i$ refer to the expected load of machine $m_i$.
    Suppose $a_\ell=x$ and $a_r=y$ at the start of a phase. After the first part of the input we then have $a_\ell+a_r=x+y+1$ after one time unit. By the assumption, this implies $a_\ell\ge (x+y+1)/2.$ Then after one time unit where flow only arrives on $m_r'$, we have $a_\ell+a_r'\ge (x+y+1)/2.$ So as long as $x+y<1$ the overall expected load on a pair of machines that are in conflict increases by $1/2-(x+y)/2$. 
    
    This shows that for any $\eps>0$, no algorithm can avoid having a pair of machines that are in conflict and that have joint expected load more than $1-\eps$ at a time when the optimal machines are empty.
    We then send flow for two time units on both these machines and end with two machines in conflict with combined expected load more than $3-\eps.$ So there is a machine with expected load more than $3/2-\eps/2.$ The lower bound follows.
\end{proof}
Let $\mu$ be the number of the independent sets in the $k$-partition that consist of a single vertex in a complete $k$-partite graph.
\begin{corollary}\label{cor:FLBkpartite}
  No randomized online algorithm can be better than $3/2$-competitive on a complete bipartite graph where $\mu=0$. For $2\ge\mu\ge1$, \Prioleftf is $1$-competitive.
\end{corollary}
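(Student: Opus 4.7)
The plan is to split the argument by the value of $\mu$, which for a complete bipartite graph $K_{a,b}$ takes values in $\{0,1,2\}$: $\mu=2$ corresponds to $K_{1,1}$, $\mu=1$ means exactly one part is a singleton, and $\mu=0$ means both parts contain at least two vertices. The two halves of the statement correspond exactly to these cases.

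For the upper bound in the case $2\ge\mu\ge 1$, the graph is exactly $K_{1,n-1}$ for some $n\ge 2$. Hence Theorem~\ref{thm:fUBK1n-1} directly gives that \Prioleftf is $1$-competitive, and no further work is needed here.

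For the lower bound in the case $\mu=0$, I would reduce to the $C_4$ lower bound of Theorem~\ref{thm:FC4LB}. Let $K_{a,b}$ with $a,b\ge 2$ be given. I would fix two vertices in each part and let $H\cong C_4$ be the subgraph induced on these four vertices; the adversary then runs exactly the construction from Theorem~\ref{thm:FC4LB} on $H$ and sends no load at all on machines outside $H$. Since those outside machines never receive load, \Opt can keep them at load $0$ throughout and behaves exactly as in the $C_4$ proof, so the scaled input still satisfies $\Opt(I)=1$.

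The step I expect to be the main obstacle is arguing that the online algorithm gains nothing from the additional machines outside $H$. The key structural observation is that every independent set of $K_{a,b}$ lies entirely inside $V_1$ or entirely inside $V_2$, so the restriction of any schedule of \Alg to the four machines of $H$ is again an independent set of $H$. Running a machine outside $H$ in some interval either blocks the opposite side of $H$ (which is no better for \Alg than having just run the corresponding side of $H$ in $C_4$) or is wasted time, since those outside machines always have load $0$. Consequently the expected maximum load of \Alg on the four $H$-machines is bounded below by what the best online strategy on $C_4$ achieves, and Theorem~\ref{thm:FC4LB} yields the bound $3/2-\eps$ for arbitrary $\eps>0$, hence the claimed $3/2$.
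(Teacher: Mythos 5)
Your proposal is correct and follows essentially the same route as the paper: for $\mu\ge1$ the graph is $K_{1,n-1}$ and Theorem~\ref{thm:fUBK1n-1} applies, while for $\mu=0$ the bound is inherited from the induced $C_4$ via Theorem~\ref{thm:FC4LB}, with the adversary sending load only on those four machines. The extra detail you give (that outside machines never receive load, so running them cannot help and any schedule restricts to a valid $C_4$ schedule) is exactly the standard justification the paper leaves implicit.
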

\begin{proof}
    This first statement follows immediately from the fact that $C_4$ is a induced subgraph of any complete bipartite graph where $\mu=0$ and Theorem \ref{thm:FC4LB}. If $\mu\ge 1$ we get the complete bipartite graph $K_{1,n-1},$ for which \Prioleftf is $1$-competitive by Theorem~\ref{thm:fUBK1n-1}.
\end{proof}

This shows that for complete bipartite graphs with $\mu=0$, \Greedy is a best possible algorithm and for all other complete bipartite graphs, \Prioleftf is a best possible algorithm.

We now consider general $k$-partite graphs and recover almost tight bounds, where the lower bound deteriorates as $\mu$ increases.
\begin{theorem}\label{thm:FcompletekpartiteUB}
   \Greedy is $(H_{k-1}+\frac{1}{2})$-competitive on complete $k$-partite graphs.
\end{theorem}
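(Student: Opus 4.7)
The plan is to adapt the argument of Theorem~\ref{thm:FKnUB} to the complete $k$-partite setting, with Lemma~\ref{lem:Skpartitedelay} applied to nested unions of parts playing the role that Lemma~\ref{lem:completecliqueflow} on nested cliques plays there. I would set $A_i := \max_{v \in V_i} a_v$ and $D_i := \max_{v \in V_i}[a_v - z_v]^+$, reorder so that $A_1 \ge A_2 \ge \cdots \ge A_k$, and put $A_{k+1} := 0$ by convention. The analogue of the identity $a_1 = a_2$ used in the $K_n$ proof is $A_1 = A_2$, which follows from the fact (explicitly derived in the proof of Lemma~\ref{lem:greedy}) that under \Greedy every machine has a neighbor of load at least its own: the maximum-load machine in $V_1$ therefore has a twin of equal load in some other part, since in a complete $k$-partite graph all neighbors of a vertex lie outside its own part.

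For each $j \in \{2, \ldots, k\}$ the set $X_j := V_1 \cup \cdots \cup V_j$ is a smooth complete $j$-partite induced subgraph with $N(X_j) \subseteq V_{j+1} \cup \cdots \cup V_k$, so Lemma~\ref{lem:Skpartitedelay} gives $\sum_{i=1}^j D_i \le \max\{j A_{j+1},\, j-1\}$. Since $z \le 1$ yields $D_i \ge A_i - 1$, this becomes $\sum_{i=1}^{j} A_i \le j + \max\{j A_{j+1},\, j-1\}$. I would then let $j^* := \min\{j \in \{2,\ldots,k\} : A_{j+1} < (j-1)/j\}$, which is well defined because $A_{k+1} = 0$ forces $j = k$ into this set. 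For $j < j^*$ the inequality reduces to the clique-style bound $\sum_{i=1}^{j} A_i \le j(1 + A_{j+1})$; at $j = j^*$ it becomes $\sum_{i=1}^{j^*} A_i \le 2 j^* - 1$.

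The main step is then a weighted combination of these $j^*-1$ inequalities, using exactly the weights from the $K_n$ proof truncated at $j^*$: namely $w_j := 1/(j(j+1))$ for $2 \le j \le j^* - 1$ together with the terminal weight $w_{j^*} := 1/j^*$. A partial-fraction computation shows $\sum_{j \ge i} w_j = 1/2$ for $i \in \{1,2\}$ and $1/i$ for $i \in \{3, \ldots, j^*\}$, so after bringing the $A_{j+1}$ terms from the right-hand sides of the $j < j^*$ inequalities to the left, the coefficients of $A_3, \ldots, A_{j^*}$ cancel exactly and one is left with
\[
\frac{A_1 + A_2}{2} \le \sum_{j=2}^{j^*-1}\frac{1}{j+1} + \frac{2j^*-1}{j^*} = H_{j^*-1} + \frac{1}{2}.
\]
Combined with $A_1 = A_2$ and the monotonicity $H_{j^*-1} \le H_{k-1}$, this yields $A_1 \le H_{k-1} + 1/2$, as claimed.

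The main obstacle is choosing the correct truncation index. Running the weighted sum through to $j = k$ with the standard $K_n$ weights fails to telescope at any index where the $(j-1)$-branch of the maximum dominates: those inequalities carry no $A_{j+1}$ on their right-hand side to cancel the corresponding left-hand coefficient, and the leftover $A_{j+1}/(j+1)$ terms inflate the resulting constant beyond $H_{k-1} + 1/2$. Cutting off the sum exactly at the first index $j^*$ where the $(j-1)$-branch kicks in is what allows the telescoping to close up cleanly and produces the tight constant $H_{j^*-1} + 1/2$.
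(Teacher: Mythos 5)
Your proposal is correct and follows essentially the same route as the paper's proof: the same application of Lemma~\ref{lem:Skpartitedelay} to the nested unions $V_1\cup\cdots\cup V_j$, the same balancing fact $A_1=A_2$ from Lemma~\ref{lem:greedy}, the same cutoff index (the paper's $\ell$ is your $j^*$, up to a strict-versus-non-strict tie-breaking that changes nothing), and the same weights $1/(j(j+1))$ with terminal weight $1/\ell$ yielding $\tfrac12(A_1+A_2)\le H_{\ell-1}+\tfrac12\le H_{k-1}+\tfrac12$.
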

\begin{proof} 
Let $G(\bigcup_{i=1}^k V_i,E)$ be a complete $k$-partite graph. Let $A_i=\max_{t\in V_i} a_t$ for all $i=1,\ldots,k$. Reorder the color classes so that $A_1\ge A_2\ge \ldots \ge A_k.$  
Lemma \ref{lem:Skpartitedelay} also applies to the flow model.
Then, (\ref{eq:Skpartitedelay}) implies that for each $j=1,\ldots,k$
\[\sum\limits_{i=1}^j A_i\le \max \{j A_{j+1},j-1\}+j.\]
This holds because when we consider the smooth and complete $j$-partite subgraph $G_{j}=(\bigcup_{i=1}^{j} V_i,E)$ we have $N(\bigcup_{i=1}^{j} V_i)=V\setminus\bigcup_{i=1}^{j} V_i=\bigcup_{i=j+1}^{k}V_{j+1}$ and $A_{j+1}\ge A_{s}$ for $s\ge j+1.$  

Picking the smallest $\ell\le k$ for which $\ell A_{\ell+1}\le \ell -1$ (such a value $\ell$ exists because $A_{k+1}=\max_\emptyset a=0$), we get $\max\{j A_{j+1},j-1\}=j A_{j+1}$ for all $j=1,\ldots,\ell -1$ and $\max\{\ell A_{\ell+1},\ell-1\}=\ell-1.$

For $j=2,\ldots, \ell-1,$ multiply the $j$-th inequality by $1/j-1/(j+1)=1/(j(j+1)),$ multiply the $\ell$-th
 inequality by $1/\ell$, and then add these $\ell-1$ inequalities together. We get
 \[\sum_{j=2}^{\ell-1}\frac{1}{j}\sum_{i=1}^j A_i-\sum_{j=2}^{\ell-1}\frac{1}{j+1}\sum_{i=1}^j A_i+\frac{1}{\ell}\sum_{i=1}^\ell A_i\le \sum_{j=2}^{\ell-1}\frac{1}{j+1}A_{j+1}+H_{\ell-1}+\frac{1}{2},\]
 which after cancellations, yields $\frac{1}{2}(A_1+A_2)\le H_{\ell-1}+\frac{1}{2}\le H_{k-1}+\frac{1}{2}.$ By Lemma~\ref{lem:greedy} \Greedy maintains $A_1=A_2$\footnote{We ignore the first inequality $A_1\le A_2+1$ because the invariant $A_1=A_2$ is a stronger bound.} in the flow model, hence $A_1=\frac{1}{2}(A_1+A_2)\le H_{k-1}+\frac{1}{2}$ and the theorem follows because $A_1$ is the maximal load on the graph.
 \end{proof}
\begin{theorem} 
\label{thm:FcompletekpartiteLB}
    For $k\ge3$, each online algorithm has a competitive ratio of at least $H_{k-1}+1/2$ (resp.,  $H_{k-1}-1/2+(k-\mu)/(k-1)$) for complete $k$-partite graphs with $\mu=0$ (resp.,  $\mu\ge1).$
\end{theorem}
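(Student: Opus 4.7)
My plan is to extend the $K_n$ lower-bound construction of Theorem~\ref{thm:FKnLB} by exploiting the extra structure of a $k$-partite graph. The first step is to pick one representative $m_i \in V_i$ from each color class; since machines in distinct classes are adjacent, $\{m_1,\ldots,m_k\}$ induces a copy of $K_k$, and running the $K_n$ adversary on these representatives already gives an expected maximum load of at least $H_k-1/2$. OPT's schedule for the $K_n$ adversary lifts to the $k$-partite graph without change: in every phase only the representatives ever receive flow, so OPT can still process one class (equivalently, one representative) at a time and finishes each phase with buffer at most $1$. The induction of Theorem~\ref{thm:FKnLB} then carries over verbatim.

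To close the gap between $H_k-1/2$ and the target $H_{k-1}+1/2$ (resp.\ $H_{k-1}-1/2+(k-\mu)/(k-1)$), I would augment the construction with extra phases that also send flow on the second machines of the $k-\mu$ non-singleton classes. The key leverage is that OPT, unlike the algorithm, can run an entire class in parallel, so flow simultaneously on a representative and its twin is absorbed by OPT at no additional cost, whereas the algorithm, having to serialize across classes, is forced to accumulate expected load on at least one of them. By tuning the length of each such extra phase and the subset of second machines receiving flow, one can arrange that each non-singleton class contributes an additional $1/(k-1)$ to the algorithm's top expected load. Summing over the $k-\mu$ non-singleton classes gives the required extra $(k-\mu)/(k-1)$ for the $\mu\ge 1$ case; in the $\mu=0$ case the contribution caps at $+1$ because the algorithm can always avoid being caught by more than $k-1$ of the $k$ classes simultaneously, yielding the saturated bound $H_{k-1}+1/2$.

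The main obstacle will be carrying out the extra-phase design while keeping OPT feasible at buffer $1$ and while preserving the $H_k-1/2$ baseline from the $K_n$ phases. Concretely, one must (i) show OPT can interleave the full-class runs needed to absorb the twin-machine flow with its usual phase-by-phase emptying of representative load, and (ii) argue that each non-singleton class delivers its $1/(k-1)$ contribution in expectation against an arbitrary (possibly randomized) algorithm, regardless of which class the algorithm chooses to prioritize at any given time. A secondary subtlety, unique to the $\mu=0$ case, is establishing the saturation at $H_{k-1}+1/2$: the naive extrapolation of the per-class contribution would predict $k/(k-1)>1$, so one has to argue explicitly why no single algorithmic machine can be forced to absorb load beyond the $+1$ threshold. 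Integrating both observations into a clean inductive invariant on the expected load of a distinguished "heavy" representative is where I expect the bulk of the technical work.
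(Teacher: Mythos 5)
There is a genuine gap: your proposal identifies the right baseline but never supplies the mechanism that produces the extra $(k-\mu)/(k-1)$ (resp.\ $+1$), and the leverage you do name does not exist. A representative and its ``twin'' lie in the \emph{same} independent set, so they are not in conflict; the online algorithm can run them in parallel exactly as \Opt can, and so sending flow simultaneously on a representative and its twin does not by itself force the algorithm to ``serialize across classes'' or accumulate load. Consequently the claim that each non-singleton class contributes an additional $1/(k-1)$ is unsupported (note also that adding $(k-\mu)/(k-1)$ on top of $H_k-1/2$ would overshoot the stated bound, which signals a baseline mix-up: the theorem's bound is $H_{k-1}-1/2$ plus the extra term, not $H_k-1/2$ plus it). Your ordering is also problematic: after the $K_k$ adversary of Theorem~\ref{thm:FKnLB} finishes, \Opt has load $1$ on every representative, so any further flow forces \Opt to spend processing time during which the algorithm can drain as well; you give no argument that additional delay can still be extracted at that point, and this is precisely the part you defer as ``the bulk of the technical work.''

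The paper's construction runs in the opposite order and uses an adaptive restart argument that your plan is missing. First comes a loading phase: send one unit of flow on one selected machine per class, wait $k-2$ time units; if some selected machine of a non-singleton class has online load below $1-\eps$, the adversary arranges \Opt's schedule so that \Opt's single residual unit sits on exactly that machine, switches the selection to a twin in the same class and sends one more unit there. \Opt then runs that whole class in parallel and ends \emph{empty}, so the round can be repeated, while the algorithm's total load provably grows by at least $\eps$ per repetition; a competitive algorithm must therefore eventually hold at least $1-\eps$ on every non-singleton representative while \Opt holds just one unit, on a machine that is then discarded (a singleton-class machine if $\mu\ge1$). Only then is the $K_{k-1}$ harmonic construction of Theorem~\ref{thm:FKnLB} run on the remaining $k-1$ representatives, starting from the accumulated delay $(k-\mu)(1-\eps)$ (resp.\ $(k-1)(1-\eps)$ when $\mu=0$) spread evenly, which yields $H_{k-1}-1/2+(k-\mu)/(k-1)$, resp.\ $H_{k-1}+1/2$. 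Without this restart/switching mechanism and the delayed start of the harmonic phase, your outline does not yield the claimed bounds.
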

\begin{proof}
This proof is analogous to the proof for the original model provided by Chrobak \cite{private}.

Denote the $k$ independent sets in the
complete $k$-partite graph by $V_1,\ldots, V_k$. Without loss of generality, $\vert V_1\vert, \ldots, \vert V_{k-\mu}\vert>1$ and $\vert V_{k+1-\mu}\vert, \ldots, \vert V_k\vert=1.$

The construction consists of three phases:
\begin{itemize}
    \item \textbf{First phase}: Select $\eps>0$. The goal is to increase total load on $V_1,\ldots,V_{k-\mu}$ to at least $(k-\mu) (1-\eps).$ Select a machine from each set called $m_i$ ($1\le i\le k$). We note that there will always be exactly one selected machine from each independent set. 
    
    Send flow for $1$ time unit on all selected machines. Continue by waiting for $k-2$ time units. We then make the case distinction:
    \begin{enumerate}
        \item If all selected machines $m_i$ of $V_1,\ldots,V_{k-\mu}$ have load at least $1-\eps$, go to second phase. (Here \Opt has total load $1$ at the end.)
        \item There exists at least one machine $m_x$ ($1\le x\le k-\mu$.) with load $a_x<1-\eps.$ Select a machine $m_{x'}\neq m_x$ (works because we consider a non singleton set) from $V_x$ (and replace $m_x$). Send flow on $m_{x'}$ for one time unit.  
        
        We consider how the loads can change:
        
        \emph{Offline}: In the first $k-1$ time units \Opt works on all $V_i$ except $V_x$ and ends with $a_x=1$, all other loads $0=a_i\neq a_x.$ Then it works for one time unit on $V_x$ (so $m_x$ and $m_{x'}$ in parallel) and ends with $a_x=0$ and $a_{x'}$ remains $0$. (Here \Opt has total load 0 at the end; so the round can be restarted.)
        
        \emph{Online:} Assume online has total load $S$ at the start of a round. After the first $k-1$  time units the total load increases to $S+1$ ($k$ load arrives, load can be reduced by $k-1$). Then one unit arrives on $m_{x'}$ (more than $\eps$ time was wasted on $m_x$ in the first $k-1$ time units, because else its load would be at least $1-\eps$), so the total load increases to at least $S+\eps$ after the last time unit.
        
        Precisely, assume that \Alg runs $m_x$ for time $q>\eps$ in the first $k-1$ time units. Then it runs the other machines for time $k-1-q$ and there arrives flow of $k-1$, so the total load increases by at least $q$ on the machines other than $m_x.$ Furthermore, assume that \Alg runs $m_{x'}$ for time $r\le 1$ while load arrives on $m_{x'}$, so for \Alg the load $a_{x'}$ increases by $1-r.$ The remaining time $1-r$ is spent on the machines in $V_1,\ldots,V_k\neq V_x$ and no load arrives there. So at the end the total load increases by at least $(1-r)+(q-(1-r))=q>\eps.$
    \end{enumerate}
    Continue this pattern (if we do not proceed to the second phase by the case distinction).  
    
    If \Alg does not balance the $\eps$ increase over all selected machines in $V_1,\ldots,V_{k-\mu}$, it would eventually break due to the constant increase of $\eps$ for the total load. 
    \item \textbf{Second phase}: Start with \Opt total load $1$ (let it be on machine $m_k$) and \Alg has load at least $1-\eps$ on each selected machine of $V_1,\ldots,V_{k-\mu}$.
    
    Adversary unselects $m_k$, leaving $k-1$ selected machines in $V_1,\ldots,V_{k-1}.$ Offline all these are empty.
    
    If $\mu\ge 1$, online the total load on the selected machines in $V_1,\ldots,V_{k-\mu}$ is still at least $(k-\mu)(1-\eps)$ (as $m_k$ was a in a singleton set).  
    If $\mu=0$, then the online load on the remaining $k-1$ selected machines is at least $(k-1)(1-\eps)$, because we have unselected a machine from a non singleton set.
    
    Move to third phase which only affects the remaining $k-1$ selected machines.
    \item \textbf{Third phase}: Use the lower bound construction in the proof of Theorem \ref{thm:FKnLB} for the complete graph $K_{k-1}$ of the selected machines where a delay of $S$ is already present on the graph.
    
     If $\mu\ge 1$ we have $S=(k-\mu)(1-\eps)$ (because we have no lower bound of load for the selected machines from singleton sets) and if $\mu=0$ we have $S=(k-1)(1-\eps)$. Balancing this delay evenly over all selected machine we get an initial delay of $d_i=(k-\mu)(1-\eps)/(k-1)\to (k-\mu)/(k-1)$ for $\mu\ge 1$ and $d_i=(k-1)(1-\eps)/(k-1)\to 1$ for $\mu=0$ on each machine.
     
     The lower bound construction for complete graphs (Theorem \ref{thm:FKnLB}) is applicable because for $K_{k-1}$ we have $n=k-1\ge 2$, as $k\ge 3$.\footnote{In the flow model if we consider only a single vertex, its load cannot increase if we run it at speed $1$, even if there is delay.} This then yields the bounds $H_{k-1}-1/2+(k-\mu)/(k-1)$ for $\mu\ge1$ and $H_{k-1}-1/2+1=H_{k-1}+1/2$ for $\mu=0.\hfill\qed$
\end{itemize}
\end{proof}

Together with the results for complete bipartite graphs, we get almost tight bounds on general complete $k$-partite graphs.

\section{Graphs with Four Vertices}
We now focus on the graphs with four vertices. There exist six different graphs: $P_4,\linebreak[0] K_4,\linebreak[0] K_{1,3},\linebreak[0] K_{2,2},\linebreak[0] K_4-e$ and $K_3+e.$

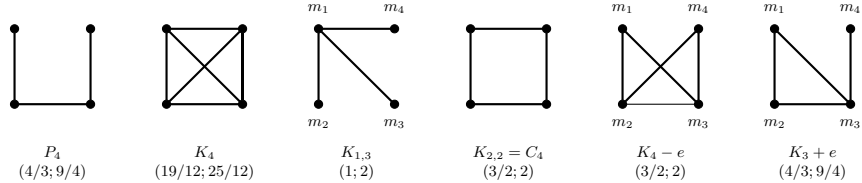
\begin{figure}[ht!]
\centering
\begin{tikzpicture}[on grid, node distance=1cm and 1cm]
	\def\sep{2cm} 
	
	\begin{scope}[shift={(0*\sep,0)}]
		\node[main node, label={[scale=\txtscale]left:}] (P41) at (0,0) {};
		\node[main node, label={[scale=\txtscale]left:}] (P42) [below = of P41] {};
		\node[main node, label={[scale=\txtscale]right:}] (P43) [right = of P42] {};
		\node[main node, label={[scale=\txtscale]right:}] (P44) [above = of P43] {};
		\path[draw,thick] (P41) -- (P42) -- (P43) -- (P44);
		\draw (P42) -- (P43) node [scale=\txtscale, midway, yshift=-1.2cm,align=center] {$P_4$\\$(4/3;9/4)$};
	\end{scope}
	
	\begin{scope}[shift={(1*\sep,0)}]
		\node[main node, label={[scale=\txtscale]left:}] (K41) at (0,0) {};
		\node[main node, label={[scale=\txtscale]left:}] (K42) [below = of K41] {};
		\node[main node, label={[scale=\txtscale]right:}] (K43) [right = of K42] {};
		\node[main node, label={[scale=\txtscale]right:}] (K44) [above = of K43] {};
		\path[draw,thick] (K41) -- (K42) -- (K43) -- (K44) -- (K41) -- (K43) -- (K44) -- (K42);
		\draw (K42) -- (K43) node [scale=\txtscale, midway, yshift=-1.2cm,align=center] {$K_4$\\$(19/12;25/12)$};
	\end{scope}
	
	\begin{scope}[shift={(2*\sep,0)}]
		\node[main node, label={[label distance=1.2mm,scale=\txtscale]above:$m_1$}] (K131) at (0,0) {};
		\node[main node, label={[label distance=1.2mm,scale=\txtscale]below:$m_2$}] (K132) [below = of K131] {};
		\node[main node, label={[label distance=1.2mm,scale=\txtscale]below:$m_3$}] (K133) [right = of K132] {};
		\node[main node, label={[label distance=1.2mm,scale=\txtscale]above:$m_4$}] (K134) [above = of K133] {};
		\path[draw,thick] (K132) -- (K131) -- (K133);
		\path[draw,thick] (K131) -- (K134);
		\draw[opacity=0] (K132) -- (K133)
		node [opacity=1, scale=\txtscale, midway, yshift=-1.2cm,align=center] {$K_{1,3}$\\$(1;2)$};
	\end{scope}
	
	\begin{scope}[shift={(3*\sep,0)}]
		\node[main node, label={[scale=\txtscale]left:}] (C41) at (0,0) {};
		\node[main node, label={[scale=\txtscale]left:}] (C42) [below = of C41] {};
		\node[main node, label={[scale=\txtscale]right:}] (C43) [right = of C42] {};
		\node[main node, label={[scale=\txtscale]right:}] (C44) [above = of C43] {};
		\path[draw,thick] (C41) -- (C42) -- (C43) -- (C44) -- (C41);
		\draw (C42) -- (C43) node [scale=\txtscale, midway, yshift=-1.2cm,align=center] {$K_{2,2}=C_4$\\$(3/2;2)$};
	\end{scope}
	
	\begin{scope}[shift={(4*\sep,0)}]
		\node[main node, label={[label distance=1.2mm,scale=\txtscale]above:$m_1$}] (K4E1) at (0,0) {};
		\node[main node, label={[label distance=1.2mm,scale=\txtscale]below:$m_2$}] (K4E2) [below = of K4E1] {};
		\node[main node, label={[label distance=1.2mm,scale=\txtscale]below:$m_3$}] (K4E3) [right = of K4E2] {};
		\node[main node, label={[label distance=1.2mm,scale=\txtscale]above:$m_4$}] (K4E4) [above = of K4E3] {};
		\path[draw,thick] (K4E1) -- (K4E2) -- (K4E4) -- (K4E3) -- (K4E1);
		\draw (K4E2) -- (K4E3) node [scale=\txtscale, midway, yshift=-1.2cm,align=center] {$K_4-e$\\$(3/2;2)$};
	\end{scope}
	
	\begin{scope}[shift={(5*\sep,0)}]
		\node[main node, label={[label distance=1.2mm,scale=\txtscale]above:$m_1$}] (K3E1) at (0,0) {};
		\node[main node, label={[label distance=1.2mm,scale=\txtscale]below:$m_2$}] (K3E2) [below = of K3E1] {};
		\node[main node, label={[label distance=1.2mm,scale=\txtscale]below:$m_3$}] (K3E3) [right = of K3E2] {};
		\node[main node, label={[label distance=1.2mm,scale=\txtscale]above:$m_4$}] (K3E4) [above = of K3E3] {};    
		\path[draw,thick] (K3E3) -- (K3E1) -- (K3E2) -- (K3E3) -- (K3E4);
		\draw[opacity=0] (K3E2) -- (K3E3)
		node [opacity=1, scale=\txtscale, midway, yshift=-1.2cm,align=center] {$K_3+e$\\$(4/3;\UBK)$};
	\end{scope}
	
\end{tikzpicture}  
\caption{Graphs with four vertices. The numbers $(R_f; R_o)$ represent the competitive ratios in the flow model ($R_f$) and the original model ($R_o$), respectively.  
  }
  \label{fig:graphs}
\end{figure}

We first consider some graphs that are already covered by previous results.
\begin{itemize}
\item The competitive ratio for the path $P_4$ is $4/3$ \cite{DBLP:journals/tcs/HohneS21}.

\item $K_4$ is a complete graph, so \Greedy is $(H_4-1/2=19/12)$-competitive by Theorem \ref{thm:FKnUB} and no algorithm can perform better (Theorem \ref{thm:FKnLB}). 

\item $K_{1,3}$ is a complete bipartite graph with $\mu=1$, so \Prioleftf is optimal (Theorem \ref{thm:fUBK1n-1}).

\item $K_{2,2}=C_4$ is a complete bipartite graph with $\mu=0$, so \Greedy is $3/2$-competitive by Theorem \ref{thm:FcompletebipartiteUB} and no algorithm can perform better (Theorem \ref{thm:FC4LB}).
\end{itemize}

\subsection{$K_4-e$}

Since $K_4-e$ is a complete tripartite graph, \Greedy is at worst $2$-competitive on it. It is easy to verify that the lower bound of 2 
does work for \Greedy (unlike on the complete tripartite graph $K_3$). However, a more sophisticated algorithm can achieve significantly improved competitiveness.

\begin{algorithm}[H]
\label{alg:Fk4-e}
\KwData{$K_4-e$ graph, where $\text{deg}(m_1)=\text{deg}(m_4)=2$ and $\text{deg}(m_2)=\text{deg}(m_3)=3$}
\eIf{$a_1\ge\max\{1,a_2,a_3\}$ or $a_4\ge\max\{1,a_2,a_3\}$ or ($a_2=0$ and $a_3=0$)}{run \Greedy on all machines;}
{run \Greedy on $m_2$ and $m_3$}
  \caption{\PrioCenterf}
\end{algorithm}

\begin{theorem}\label{thm:FUBK4-eUB}
Algorithm \PrioCenterf is $3/2$-competitive on $K_4-e.$ 
\end{theorem}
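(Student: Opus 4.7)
My plan is to show $a_i \le 3/2$ for all machines $i$ and all times by combining a clique-based invariant with a structural bound, then applying a bootstrap argument. The clique invariant comes from Lemma~\ref{lem:completecliqueflow} applied to $K = \{m_2, m_3\}$ with $L = N(K) = \{m_1, m_4\}$. Although the lemma is stated for \Greedy, its proof only needs the property that whenever no machine in $L$ is running, some machine in $K$ is running (provided $K$ has load or is receiving any). Under \PrioCenterf this holds: in mode~2 we run \Greedy on $\{m_2, m_3\}$ directly, so one of them runs whenever possible; in mode~1 either $m_1$ or $m_4$ is running (blocking $K$), or \Greedy on the full graph selects a machine in $K$ anyway. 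This yields the invariant $a_2 + a_3 \le z_2 + z_3 + 2\max(a_1, a_4) \le 2 + 2\max(a_1, a_4)$.

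The structural bound is $a_1 \le \max(1, a_2, a_3)$, and symmetrically for $a_4$. To see this, note that $a_1$ can only strictly increase when $m_1$ is receiving load but not running. In mode~1 the machine $m_1$ always runs: either because $a_1 \ge \max(1, a_2, a_3)$ triggers mode~1 by the first clause and \Greedy selects $m_1$ together with $m_4$ as its unique non-neighbor; or because $a_4$ triggers mode~1 and $m_1$ is added as $m_4$'s non-neighbor; or in the degenerate case $a_2 = a_3 = 0$ where \Greedy runs $\{m_1, m_4\}$ (or does nothing if $a_1 = a_4 = 0$). Hence $a_1$ only grows in mode~2, where the mode-switch condition enforces $a_1 < \max(1, a_2, a_3)$.

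To conclude, suppose for contradiction that $T$ is the first time some $a_j(T) > 3/2$. By continuity and minimality, $a_i(T) \le 3/2$ for all $i$ with $a_j(T) = 3/2$, and $a_j$ is strictly increasing at $T^+$. If $j \in \{1, 4\}$, the structural bound immediately gives $a_j \le \max(1, a_2, a_3) \le 3/2$, so $a_j$ cannot strictly exceed $3/2$. If $j \in \{2,3\}$, say $j = 2$, then $m_2$ is blocked at $T^-$, which forces either $a_3 \ge a_2 = 3/2$ (mode~2 subcase, yielding $a_2 + a_3 = 3$ and therefore $\max(a_1, a_4) \ge 1/2$ via the clique invariant) or $\max(a_1, a_4) \ge 3/2$ (mode~1 subcase, yielding equality by minimality and hence that $m_1$ or $m_4$ is running and is not itself growing). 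The main obstacle will be the mode~2 subcase with $a_2 = a_3 = 3/2$ and \Greedy splitting time evenly between them: the clique invariant is tight but not immediately contradictory, and $a_2$ can still grow at rate $1/2$ when $m_2$ receives load. I expect the resolution to require a refined delay-based invariant combining $d_1, d_2, d_3, d_4$ into a sum bounded by a function of $\max(a_1, a_4)$, or a direct argument that such a tight state is unreachable given that \Opt must have maintained $z_i \le 1$ on every machine throughout the interval leading up to $T$.
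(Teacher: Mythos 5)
Your outline does not close, and you say so yourself: the mode~2 subcase with $a_2=a_3=3/2$, both receiving load and \Greedy splitting speed between them, is exactly the case the whole theorem hinges on, and neither of your two ingredients rules it out. The clique invariant $a_2+a_3\le z_2+z_3+2\max(a_1,a_4)$ is genuinely too weak here: even granting $\max(a_1,a_4)\le 3/2$ (or $\le\max(1,a_2,a_3)$ via your structural bound, which makes the estimate circular), it only yields $a_2+a_3\le 2+2\max(a_1,a_4)$, which is compatible with $a_2=a_3$ well above $3/2$. No bootstrap on these two facts alone can pin $a_2$ below $3/2$; the ``refined delay-based invariant'' you defer to is not an optional strengthening but the actual content of the proof. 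The paper's argument establishes the triangle invariants $d_1+d_2+d_3\le 0$ and $d_2+d_3+d_4\le 0$ directly for \PrioCenterf (the two invariants support each other: each can only be endangered while $a_1=0$ resp.\ $a_4=0$ and the opposite outer machine forces the algorithm away from $m_2,m_3$, and then the other invariant supplies $d_2+d_3<0$). These give the much stronger $d_2+d_3\le 0$, i.e.\ $a_2+a_3\le z_2+z_3\le 2$, plus the pairwise bounds $d_i+d_j\le 1$ for $(i,j)\in\{1,4\}\times\{2,3\}$; combined with the observation that $a_1>1$ forces $a_1=\max(a_2,a_3)$ (your structural bound, which the paper also uses), one gets $a_1\le 3/2$ and, via $d_2+d_3\le0$, that $a_2>1$ only occurs tied with $a_1$ or $a_4$, hence $a_2\le 3/2$ as well.

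Two smaller points. First, your justification for transferring Lemma~\ref{lem:completecliqueflow} to \PrioCenterf addresses the wrong failure mode: the condition ``whenever no machine in $L$ runs, some machine in $K$ runs'' controls the growth of $d_2+d_3$, but the dangerous case for that invariant is when a machine of $L$ \emph{does} run, so that $2\max(a_1,a_4)$ decreases while \Opt processes $K$; the Greedy proof survives this only because Greedy runs an $L$-machine in preference to loaded $K$-machines only when its load is at least as large, and you would have to re-verify this slack for the mode-1 triggers of \PrioCenterf (including tie situations). Second, ``in mode~1 the machine $m_1$ always runs, hence $a_1$ only grows in mode~2'' is not quite right: with a tie $a_1=a_2$ Greedy runs $m_1$ at speed $1/2$, so $a_1$ can grow in mode~1 as well; the bound $a_1\le\max(1,a_2,a_3)$ still holds, but the correct reason is that in such a tie the maximum grows along with $a_1$ (and a unique maximum is never overtaken), which is the tie argument the paper uses rather than the ``always runs'' claim.
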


\begin{proof}
We first show that \PrioCenterf maintains the invariant $d_1+d_2+d_3\leq 0$. It is true at the start. Whenever $a_1>0$ and $m_2$ and $m_3$ are both not running, $m_1$ is running. So during times in which $a_1>0$, $d_1+d_2+d_3$ does not increase.

Suppose $d_1+d_2+d_3\le0$ at the start of an interval in which $a_1>1$. Then since $a_1>0$, this remains true throughout the interval. Moreover, $a_1>1$ implies $d_1>0$. Hence, $d_2+d_3$ remains negative during such an interval. 

Finally, if $a_1=0$, then $d_1\le0$ and $d_2+d_3$ can only increase if $m_4$ is running. This can only happen if $a_2=a_3=0$, in which case the invariants are maintained, or $a_4>1$, so $d_4>0$, which implies $d_2+d_3<0$ as long as we have $d_2+d_3+d_4\le0$ symmetrically to above. 

We conclude that $d_1+d_2+d_3\leq 0$ remains true as long as $d_2+d_3+d_4\le0$ or $a_1>0$ and $d_2+d_3+d_4\leq 0$ remains true as long as $d_1+d_2+d_3\le0$ or $a_4>0$. Both are true at the start, so both remain true (also during times at which $a_1=a_4=0$). We also see from this proof that $d_2+d_3\le0$ always holds. 

By $d_1+d_2+d_3\leq 0$ we have $d_1+d_2\le -d_3=z_i-a_i\le 1.$ By symmetry $d_i+d_j\le 1$ follows for any combination $(i,j)\in\{1,4\}\times\{2,3\}.$

If $a_1>1$, then by the rules of the algorithm $a_1=\max(a_2,a_3)$. Without loss of generality, let $a_2=\max(a_2,a_3)$. Then $d_1>0$ and $d_2>0$. Hence $d_3<0$ and $a_3<1$. Therefore, as long as $a_1>1$, only $m_1$ (and $m_4$) and $m_2$ can be running, and $a_1=a_2$. We also have that $a_2>1$ can only happen if $a_1>1$ (or $a_4>1$), as $d_2+d_3\le0$.  
In this case also $a_1=a_2$ (or $a_4=a_2$).
Since $d_1+d_2\le1$, we conclude that $a_1\le3/2$ at all times.  
By symmetry, the same holds for all machines.
\end{proof}
Sending flow on $m_1$ and $m_3$ for one time unit  
followed by flow on $m_1$ and $m_2$ for two time units shows that this is tight. After one time unit we have $\Alg=(1,0,0,0)$ and $\Opt=(0,0,1,0)$. Then after two more time units the combined load on machines $m_1$ and $m_2$ increases by $2.$ Hence the algorithm balances the load and ends in state $\Alg=(3/2,3/2,0,0)$, while $\Opt=(1,1,1,0).$

 \begin{theorem}\label{thm:Fk4-eLB}
    No randomized online algorithm can be better than $3/2$-com\-petitive for $K_4-e.$
\end{theorem}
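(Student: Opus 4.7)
The plan is to recognize $K_4 - e$ as the complete tripartite graph $K_{2,1,1}$ and then derive the bound as an immediate corollary of Theorem~\ref{thm:FcompletekpartiteLB}. The unique non-edge of $K_4 - e$ is $\{m_1, m_4\}$ (the two degree-$2$ vertices), so the partition $V_1 = \{m_1, m_4\}$, $V_2 = \{m_2\}$, $V_3 = \{m_3\}$ exhibits $K_4 - e$ as a complete tripartite graph with $k = 3$ parts, of which $\mu = 2$ are singletons.

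Plugging $k = 3$ and $\mu = 2$ into the $\mu \geq 1$ formula of Theorem~\ref{thm:FcompletekpartiteLB} yields
\[
H_{k-1} - \frac{1}{2} + \frac{k-\mu}{k-1} \;=\; H_2 - \frac{1}{2} + \frac{1}{2} \;=\; \frac{3}{2},
\]
which matches the upper bound of Theorem~\ref{thm:FUBK4-eUB} and completes the argument.

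If a self-contained construction is preferred over a direct citation, one can instantiate the three phases of the proof of Theorem~\ref{thm:FcompletekpartiteLB} for this special case. Phase 1 repeatedly sends one unit of flow on the triangle formed by one representative from each part (initially $\{m_1, m_2, m_3\}$), followed by the requisite wait; whenever the current representative in $V_1$ has load below $1 - \eps$, the adversary swaps it for the other vertex of $V_1$ and repeats, exploiting that $V_1$ is the only non-singleton part. This continues until the online algorithm carries delay $\geq 1 - \eps$ on its chosen machine in $V_1$ while \Opt is empty. Phase 2 ``unselects'' the vertex in $V_3$, reducing the scene to the clique $\{m_1, m_2\} = K_{k-1} = K_2$ carrying a joint delay $\geq 1 - \eps$, which is balanced across its two vertices. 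Phase 3 then runs the $K_2$ lower bound from Theorem~\ref{thm:FKnLB} on top of this per-vertex delay of $(1 - \eps)/2$, adding $H_2 - 1/2 = 1$ to the maximum load and yielding a final maximum of $3/2 - O(\eps)$.

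I do not foresee a real obstacle: the identification $K_4 - e = K_{2,1,1}$ is immediate and the heavy lifting is inherited from Theorem~\ref{thm:FcompletekpartiteLB}. The only minor check is that Phase 3 legitimately applies Theorem~\ref{thm:FKnLB}, which requires $k - 1 \geq 2$; here $k - 1 = 2$, so the hypothesis is met.
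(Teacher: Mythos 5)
The identification $K_4-e\cong K_{2,1,1}$ is correct, and the arithmetic checks out: with $k=3$, $\mu=2$ the formula of Theorem~\ref{thm:FcompletekpartiteLB} indeed gives $H_2-\tfrac12+\tfrac12=\tfrac32$. However, there is a genuine gap with respect to what the statement actually claims. Theorem~\ref{thm:Fk4-eLB} is a lower bound for \emph{randomized} algorithms (against an oblivious adversary), whereas Theorem~\ref{thm:FcompletekpartiteLB} is stated only for ``each online algorithm'' and its proof is a deterministic-algorithm argument: in Phase 1 the adversary branches on whether the algorithm's selected machines have \emph{realized} load at least $1-\eps$, and the accounting (``\Alg runs $m_x$ for time $q>\eps$'', ``online has total load $S$'') is done on realized loads. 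An oblivious adversary facing a randomized algorithm cannot observe these quantities, so the citation only yields the $3/2$ bound for deterministic algorithms; your self-contained instantiation for $K_{2,1,1}$ inherits exactly the same adaptivity in Phase 1 (swapping the $V_1$ representative when its load is below $1-\eps$). Note that elsewhere the paper is careful about this distinction: the randomized lower bounds (Theorems~\ref{thm:FKnLB}, \ref{thm:FC4LB}, \ref{thm:Fk4-eLB}) are phrased and argued in terms of \emph{expected} loads, while Theorem~\ref{thm:FcompletekpartiteLB} is not.

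The paper's own proof of Theorem~\ref{thm:Fk4-eLB} is therefore a different, direct construction: it exploits the two triangles of $K_4-e$ sharing the edge $\{m_2,m_3\}$, alternates flow between the triangles in fixed five-unit phases that \Opt can always clear, and shows that if the expected load on the non-shared vertex ever exceeded $1-2\eps$ the adversary could finish with two time units of flow on an adjacent pair; otherwise the expected load carried on the shared edge grows by at least $2\eps$ per phase, contradicting competitiveness. To rescue your reduction you would either have to re-prove the $k$-partite lower bound with all branching and accounting done in expectation (plausible, since steps like ``if $\mathbb{E}[a_x]<1-\eps$ then the expected time spent on $m_x$ exceeds $\eps$'' do carry over, but this is not established in the paper and would need to be written out, including the Phase 3 hand-off of expected delay to Theorem~\ref{thm:FKnLB}), or argue directly as the paper does. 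As it stands, your argument proves the deterministic version of the theorem only.
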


\begin{proof}[Proof of \cref{thm:Fk4-eLB}]
The graph $K_4-e$ consists of two triangles (cycles of length 3) that share one edge. We consider an input that consists of phases. 
Each phase starts with load arriving on all vertices of one triangle for one time unit. Then, nothing happens for one time unit. Finally load arrives on all vertices of the other triangle for one time unit, and nothing happens for two time units. At the end of each phase (five time units in total), in the optimal schedule all machines are empty.

Consider some algorithm and let its total expected load on $m_2$ and $m_3$ after the first two time units be $L_1$. Then the total expected load on the second triangle is at least $L_1$ when the phase ends. The next phase starts on this triangle. 

Suppose its competitive ratio is $3/2-\eps$ for some $\eps>0$. We show that the expected load $a_1\le 1-2\eps$, implying $L_1\ge 2\eps$. Suppose for a contradiction that $a_1>1-2\eps$. We continue with sending flow for two time units on $m_1$ and $m_2$ which \Opt could have emptied. Then after two time units we have expected total load more than $1-2\eps+2$ on these two machines, so at least one machine with expected load more than $3/2-\eps,$ which contradicts that the competitive ratio of the algorithm is $3/2-\eps$. 

In phase $n$, the expected total load on the triangle that is used first in this phase is $L_{n-1}+3-2$ after two time units, and again $a_1\le1-2\eps$ at this point, so $L_n\ge L_{n-1}+1-(1-2\eps)=L_{n-1}+2\eps$.
We see that $L_n$ increases linearly with $n$, meaning that the algorithm is not competitive at all, contradicting the assumption that its competitive ratio is better than $3/2$.
\end{proof}

\subsection{$K_3+e$}
In preparation for the flow model, we first consider the original model and narrow the gap between lower and upper bound.

\paragraph{Original Model}
We first observe that the path $P_3$ is an induced subgraph and hence a lower bound of $2$ holds.
The first natural approach is to focus on the machine with highest degree, which we call $m_3$, and run \Greedy on the remaining machines.  The input $((0;1,1,0,0),\linebreak[0](1;0,1,1,0),\linebreak[0](2;0,1,0,0))$  
    shows that this is at best $5/2$-competitive. The algorithm ends in state $(1/2,5/2,0,0)$ and \Opt in $(1,1,1,0).$
This algorithm, however, puts too much emphasis on $m_3.$

\begin{proposition}\label{prop:delay}
  In the original model delays do not change at task arrival, because $a_i^t$ and $z_i^t$ change by the same value.
\end{proposition}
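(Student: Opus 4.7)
The plan is straightforward because the statement is essentially a definitional observation about the original model. I would argue as follows.

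First, I would recall the setup: in the original model, an input consists of triples $(t,k,s)$ meaning a job of size $s$ arrives on machine $k$ at time $t$, and neither the online algorithm nor \Opt has any control over whether to accept this load — both must place it in the buffer of machine $m_k$. Hence at a task arrival time $t$ on machine $m_i$, both $a_i^t$ and $z_i^t$ jump by exactly $s$ (the job's size) relative to their values $a_i^{t^-}$ and $z_i^{t^-}$ just before time $t$. For machines other than $m_i$, nothing arrives and nothing is processed instantaneously, so the loads are unchanged.

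Next, I would compute the delay directly: by definition $d_i^t := a_i^t - z_i^t = (a_i^{t^-}+s) - (z_i^{t^-}+s) = a_i^{t^-}-z_i^{t^-} = d_i^{t^-}$, and for $j\neq i$ we trivially have $d_j^t = d_j^{t^-}$. Thus no delay is altered by the arrival event itself; any change in delays occurs only between arrivals, driven by the scheduling choices of \Alg and \Opt.

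There is no real obstacle here — the only thing to be careful about is the bookkeeping convention $a_i^{t^-}$ (load on $m_i$ ignoring what arrives exactly at time $t$), which the paper has already introduced. Using that notation makes the argument a one-line cancellation. I would keep the proof to two or three sentences, emphasising that this is why in the original model all "interesting" changes in delay happen between arrivals, motivating the subsequent analysis.
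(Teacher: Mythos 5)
Your proposal is correct and is essentially the same argument the paper uses: the proposition is justified there only by the inline remark that $a_i^t$ and $z_i^t$ change by the same amount, and your explicit cancellation $d_i^t=(a_i^{t^-}+s)-(z_i^{t^-}+s)=d_i^{t^-}$ merely spells this out. Nothing is missing.
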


\begin{algorithm}[H]
\label{alg:nFk3+e}
\KwData{$K_3+e$ graph, where $\text{deg}(m_3)=3$ and $\text{deg}(m_4)=1$}
\eIf{$a_3>5/4$}{run $m_3$}
{\eIf{$a_4>5/4$ or $a_1+a_2>1$ or $a_3=0$ 
}
{run $m_4$ and run \Greedy on $m_1$ and $m_2$;}{run $m_3$;}}
  \caption{\PrioGreedy}
\end{algorithm}

\begin{theorem}\label{thm:NFK3+eUB}
In the original model, algorithm \PrioGreedy is $9/4$-com\-petitive on $K_3+e.$ 
\end{theorem}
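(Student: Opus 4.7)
The plan is to prove $a_i \le 9/4$ for each $i \in \{1,2,3,4\}$ by tracking a small collection of invariants on the delays $d_i = a_i - z_i$. Since \cref{prop:delay} guarantees that the $d_i$ do not change at job arrivals, each invariant reduces to verifying that its time derivative between arrivals is non-positive (or bounded in a controlled way) via a finite case analysis over the possible online/offline action pairs.

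I first show $d_3 \le 5/4$, which yields $a_3 \le z_3 + 5/4 \le 9/4$. Inspecting \PrioGreedy, online fails to run $m_3$ only when the second branch is active, and this branch presupposes $a_3 \le 5/4$. Hence $d_3 \le a_3 \le 5/4$ during any growth phase of $d_3$, and $d_3$ is non-increasing otherwise. For $a_4$ I then establish the pair invariant $d_3 + d_4 \le 5/4$: since $m_3$ and $m_4$ are adjacent, a case analysis yields $\dot{(d_3 + d_4)} \le 0$ whenever online actually runs one of $m_3, m_4$ at speed~$1$, and the only remaining configuration is the edge case where $m_4$ is nominally scheduled but $a_4 = 0$, in which $d_4 \le 0$ together with $d_3 \le 5/4$ bounds the sum. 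In the first branch $a_3 > 5/4 \ge z_3$ forces $d_3 > 1/4$, so the invariant gives $d_4 < 1$; in the other branches either $a_4 \le 5/4$ or $d_4$ is non-increasing. Hence $d_4 \le 5/4$ and $a_4 \le 9/4$.

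For $a_1$ and $a_2$ I would use the triple invariant $d_1 + d_2 + d_3 \le 5/4$ (which the smooth clique $\{m_1, m_2, m_3\}$ with outside neighbour $\{m_4\}$ satisfies by a case analysis analogous to the previous step, in the spirit of \cref{lem:completecliqueflow}) combined with two local invariants that hold inside individual branches: $d_1 + d_3$ is preserved while online runs $m_3$ and offline runs $m_1$, and $d_1 + d_2$ is preserved while online runs $m_2$ in the second branch. A case split on the branch active when $d_1$ threatens to cross $5/4$ (third branch contradicts $a_1 + a_2 \le 1$; first branch handled via $d_1 + d_3$ together with $d_3 > 1/4$; second branch with $a_2 > a_1$ handled via $d_1 + d_2$ combined with the fact that $a_2$ is decreasing) then gives $d_1 \le 5/4$, and by symmetry $d_2 \le 5/4$.

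The main obstacle is this last step: the triple sum bound does not directly force individual bounds on $d_1$ (since $d_2$ or $d_3$ may be negative), so the analysis has to combine several invariants and carefully track branch transitions. The threshold $5/4$ in the algorithm is chosen so that a single first-branch episode (lasting at most about one time unit, since $a_3 \le 9/4$ decreases at rate~$1$) adds just enough to $d_1$ that the starting value of $d_1 + d_3$ absorbs the growth. Making this precise across all transitions and edge cases, especially around moments when some $a_i = 0$, is where most of the technical work lies.
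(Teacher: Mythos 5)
The first block of your plan — the invariants $d_3\le 5/4$, $d_3+d_4\le 5/4$ and $d_4\le 5/4$, giving $a_3\le 9/4$ and $a_4\le 9/4$ — coincides with the paper's argument and is essentially correct. The problem is the part you yourself flag as the main obstacle, and there the sketch does not close. Two concrete issues. First, your triple invariant carries the wrong constant: what the algorithm actually maintains (and what the quantitative argument needs) is $d_1+d_2+d_3\le 1$, not $5/4$; the proof is that when none of $m_1,m_2,m_3$ is effectively run in the second branch we have $a_1=a_2=0$ and either $a_3=0$ or $a_4>5/4$, and in the latter case $d_4>1/4$ forces $d_3<1$ via $d_3+d_4\le 5/4$. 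You also never claim the global pair invariant $d_1+d_2\le 1$ (you only use the local statement that $d_1+d_2$ is preserved while $m_2$ runs); yet it is exactly this global bound that kills the second-branch case, since $d_1\ge 5/4$ there would give $a_2\ge a_1\ge 5/4$, hence $d_2\ge 1/4$ and $d_1+d_2>1$ — the observation that ``$a_2$ is decreasing'' does not bound $d_1$. If one runs the paper's final bookkeeping with your weaker constant $5/4$ in place of $1$, the bound degrades to roughly $19/8$, so the constant is not cosmetic.

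Second, and more fundamentally, your per-episode, crossing-time case split does not control accumulation of delay across interleaved episodes. The dangerous quantity is the delay of the machine with the larger load (say $d_2$ when $a_2>a_1$): it grows only while the algorithm is in the first branch ($a_3>5/4$, so $d_3>1/4$) and \Opt serves $m_2$, and each such stretch consumes part of $d_3$; but $d_3$ can be replenished in between, whenever \Opt works on $m_3$ while \PrioGreedy runs $m_2$, so ``one first-branch episode is absorbed by the starting value of $d_1+d_3$'' does not rule out repeated pumping. The paper handles this with a genuinely global accounting over a maximal interval in which one machine keeps the larger load (noting that in the original model the roles can swap without $a_1=a_2$ in between): with $x$ the net time of (\Opt on $m_2$, \Alg on $m_3$) minus the reverse, $d_2$ rises by exactly $x$ while $d_3$ at the interval start must satisfy $d_3\ge 1/4+x$, and these are combined with $d_1+d_2+d_3\le 1$, the start-of-interval relation $a_2^t+\phi_2^t\le 2+d_1^t$, and the auxiliary fact that $a_1+a_2<1$ implies $d_3<1$ (the paper's Lemma~\ref{lem:1}), in a two-case linear program (split on $a_1^t<1/2$) whose optimum is exactly $9/4$. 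None of this machinery — maximal intervals, the start relation, Lemma~\ref{lem:1}, the case split — appears in your sketch, and you acknowledge that making it precise is the remaining work; as it stands, the bound $d_1,d_2\le 5/4$ is not established, so the theorem is not proved.
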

\begin{proof} 
We first show that \PrioGreedy
maintains the following invariants which imply $a_3\le 9/4$ and $a_4\le 9/4$.
\begin{align*}
    d_3&\leq 5/4\\ 
    d_3+d_4&\leq 5/4\\ 
    d_4&\leq 5/4\\ 
    d_1+d_2+d_3&\leq1\\ 
    d_1+d_2&\leq 1
\end{align*}

The algorithm maintains $d_3\le5/4$. As long as $a_3\le 5/4$ it immediately follows $d_3\le a_3\le 5/4.$ If $a_3>5/4$ at job arrival we have $d_3\le 5/4$ and from this point on $m_3$ runs at maximum speed until $a_3\le 5/4$ again and hence $d_3$ does not increase and $d_3\le 5/4$ holds.

If $a_4=0$, it follows that $d_3+d_4\le d_3\le5/4$. If $a_4>0$ and $a_3=0$, $m_4$ is running and $d_3+d_4\le5/4$ is maintained. Finally, if both machines have load, one of them is running and again $d_3+d_4\le5/4$ is maintained.

If $a_3>5/4$, then $d_3>1/4$ so $d_4<1$. Else, $m_4$ can only not be running (meaning that $d_4$ can increase) if $a_4\le5/4$ (so $d_4\le a_4 \le 5/4)$. It follows that $d_4\le5/4$ is maintained throughout.

Next: $d_1+d_2+d_3\le1$. We run these machines at speed 1 except if $a_1+a_2=0$ and $a_3=0$ (but then immediately $d_1+d_2+d_3\le 0 < 1$) or if $a_1+a_2=0$ and $a_4>5/4$. In the latter case we have $d_1+d_2\leq 0$ since $a_1+a_2=0$ and $d_3<1$ since $d_4>1/4$, so $d_1+d_2+d_3\le1$. 

Consider a time with $d_1+d_2=1$. Then $d_1+d_2+d_3\le 1$ implies $d_3\le 0$ so $a_3 \le 1$ and also $a_1+a_2\ge 1$ holds. Thus condition (A) does not hold but condition (B1) holds and $m_1$ or $m_2$ is running. Hence $d_1+d_2$ does not increase and $d_1+d_2\le 1$ holds. 
\begin{lemma}
\label{lem:1}
   If $a_1+a_2<1$, then $d_3<1$.
\end{lemma}
\begin{proof}
Consider an interval in which $a_3>0$. Before this interval, $d_3\le0$. It is possible to reach the state $d_3\ge1$ only if $a_4\le5/4$ (because otherwise we would have $d_4>1/4$, forcing $d_3<1$). By the rules of the algorithm, this can only happen if $a_1+a_2>1$. However, once we indeed reach the state $d_3\ge1$, then it is no longer possible to reach the state $a_1+a_2<1$, because $d_3\ge1$ implies $d_4\le1/4$. This means $a_4\le5/4$, and obviously $a_3>0$. Therefore the algorithm stops running $m_1$ and $m_2$ as soon as $a_1+a_2=1$. At that point it will start running $m_3$.

We conclude that the state $d_3\ge1$ can only be reached while $a_1+a_2>1$, and in this case $a_1+a_2\ge1$ is maintained as long as $d_3\ge1$. Hence $a_1+a_2<1$ implies $d_3<1$.
\end{proof}
Define $\phi_i:=1-z_i$. This is the amount of load that could arrive instantly on machine $i$, given that the input can be feasibly processed using a buffer size of 1. 
We now show that $a_1+\phi_1\le 9/4$ and $a_2+\phi_2\le 9/4$ and hence with the above $9/4$-competitiveness for \PrioGreedy. This will be done by formulating linear programs.

In the original model it is possible to go from an interval in which $a_2$ is bigger to an interval in which $a_1$ is bigger without having $a_1=a_2$ in between. 

First consider an interval with $a_1=a_2$. Suppose $a_1+\phi_1>9/4$ so $d_1>5/4$. By $d_1+d_2\le 1$ we get $d_2<-1/4$ so $a_2<3/4$. But to get $a_1+\phi_1>9/4$ we must have $a_1>5/4$ which contradicts $a_1=a_2$. Analogously we can show that $a_2+\phi_2\le 9/4$ holds in intervals with $a_1=a_2$.

We will now consider a maximal interval with starting tine $t$ in which one machine always has the bigger load. Without loss of generality, let this machine be $m_2$.
We can assume that $a_1+\phi_1\le 9/4$ and $a_2+\phi_2\le 9/4$ held until time $t$ inclusive because it holds in intervals with $a_1=a_2$.

In the next part we consider different time steps within an interval and use the notation $x_i^{t}$ (with $x\in\{a,z,d,\phi\}$) to refer to the state at time $t.$ We omit the exponent if we make general statements about these variables.

Let $T$ be the first time which maximizes $a_2^T+\phi_2^T$. If $a_2+\phi_2\le 9/4$ holds until time $T$ inclusive we also have $a_1+\phi_1\le 9/4$ in this time because by invariant $d_1+d_2\le 1$ we would have $d_2<-1/4$ so $a_2<-1/4$ which contradicts that we would need to have $a_1>5/4$ and $a_1<a_2$. Hence we only need to show that $a_2+\phi_2\le 9/4$ holds in the interval $(t,T]$. 
In the interval $(t,T]$, $a_2>a_1$ holds and machine $m_1$ is never run, so $d_1$ does not decrease, while $d_2+d_3$ does not increase (note that by $a_2>a_1\geq 0$ the algorithm runs a machine with load). 
We have $a_2^{t^-}\le a_1^{t^-}\le a_1^{t^-}+\phi_1^{t^-}=1+d_1^{t^-}$ and $\phi_2^{t^-}\le1$, so $a_2^{t}+\phi_2^{t}=a_2^{t^-}+\phi_2^{t^-}\le 2+d_1^{t^-}=2+d_1^t$.

\paragraph{Case 1.}
Suppose $a_1^t<1/2$.
The value $a_2+\phi_2=d_2+1$ can only increase during intervals in which \Opt works on $m_2$ and \PrioGreedy works on $m_3$ (type 1 intervals). 
During such intervals, $d_3$ decreases. If \Opt works on $m_1$ instead, $d_3$ still decreases but $d_2$ remains constant.
In intervals in which \Opt and \PrioGreedy work on the same machine, $d_2$ and $d_3$ remain constant. In intervals in which \Opt works on $m_3$ or $m_1$ and \PrioGreedy works on $m_2$ (type 2 intervals), $a_2+\phi_2$ decreases.

Let $x$ be the total length of time of type 1 intervals minus the total length of time of type 2 intervals in the interval $(t,T]$. Then during $(t,T]$, the value $d_3$ decreased by at least $x$, and $d_2$ increased by exactly $x$.
We are interested in how large $d_2$ can become.
We consider the following linear program for the values as they are at time $t$. 

\begin{verbatim}
var x>=0;
var d3;
var u1;
var u2;            /* Explanations: */
maximize obj:u2+x; /* upper bound on 1+d2 */
c1:u1+u2+d3<=3;    /* d1+d2+d3<=1 */
c2:u2<=u1+1;       /* a2(t)+phi2(t)<= 2+d1(t) */
c3:d3>=1/4+x;      /* d3>1/4 in interval, so d3(t)>1/4+x */
c4:d3<=5/4;        /* invariant */
c5:d3<=1;          /* assumption, else a1(t)>=1/2 */
\end{verbatim}

The variables $u_i$ in the linear program stand for $a_i+\phi_i=1+d_i$ at time $t$ (either before or after any arrivals at time $t$, it makes no difference). All bounds hold at time $t$; c1 and c4 are invariants which always hold.
We have $a_2^{t^-}\le a_1^t<1/2$. By Lemma \ref{lem:1}, we get $d_3^t<1$ and c5 follows.
By the above, $a_2^T+\phi_2^T$ is upper bounded by $u_2+x$.

If $a_2^{T}\le1$, then $a_2^{T}+\phi_2^{T}\le2$ and we are done.
Suppose $a_2^T>1$ and let $T'\in(t,T)$ be the first time before $T$ at which $a_2^{T'}\le1$ (recall that $a_2^t<1/2$).
We can only have $u_2+x>9/4$, allowing 
$a_2^T+\phi_2^T>9/4$, if $T'-T>1/4$ because $a_2+\phi_2$ increases at speed at most 1 (and only while \PrioGreedy does not run $m_2$). During the entire interval $[T',T]$ 
we had $a_2>1$. Hence $a_1+a_2>1$ during $[T',T]$, so
the last line of the algorithm is not reached. Yet in order for $a_2+\phi_2$ to increase as assumed, \PrioGreedy must have been running $m_3$ for more than $1/4$ time during $[T',T]$.

Therefore we must have had $a_3>5/4$ and $d_3>1/4$ during those times and in particular at time $T$.

Since $d_3$ decreased by at least $x$ overall during $(t,T]$ and a decrease of $d_3$ only happens while \PrioGreedy is working on $m_3$, we must have had $d_3^{t}\ge 1/4+x$, which explains condition c3.

The optimal solution of this linear program is $9/4$. 

\paragraph{Case 2.}
Suppose $a_2^{t}\ge1/2$. 
If we replace c5 by the condition $u_1\ge1/2$, the optimal value is also $9/4$. 
This shows that the competitive ratio is at most $9/4$ as long as $T'\ge t$. 

Suppose $T'<t$. We had $a_2+\phi_2\le 9/4$ until time $t$ inclusive. Thus if we are to have $a_2^T+\phi_2^T>9/4$, then $T>t$, and after time $t$ the value $a_2+\phi_2$ must have increased further. As above it follows that $d_3^T\ge1/4$, implying $d_1^T+d_2^T\le3/4$. To have $a_2^T+\phi_2^T>9/4$ we must have $d_2^T>5/4$ which implies $d_1^T<-1/2$ so $a_1^T<1/2$ But this implies $a_1^t<1/2$ because $a_1$ did not decrease after time $t$, contradicting that $a_2^t\ge1/2$.

It follows in all cases that $d_1+1\le 9/4$ and $d_2+1\le 9/4$ so we have $a_1\le 9/4$ and $a_2\le 9/4$. 
\end{proof}

\noindent The input $I=((0;1,1,0,0),\linebreak[0](1;0,1,1,0),\linebreak[0](2;0,0,1,1),\linebreak[0](3;0,0,0,1),\linebreak[0](5;0,0,1,1),\linebreak[0](6;1,1,1,0),\linebreak[0](9;1,0,1,0),\linebreak[4](10;0,1,1,0),\linebreak[0](11;0,0,1,0))$ shows that the competitive result for \PrioGreedy is tight.

\begin{theorem}\label{thm:NFK3+eLBo}
    In the original model, no (deterministic) online algorithm can be better than $13/6$-compe\-ti\-tive for $K_3+e$.
\end{theorem}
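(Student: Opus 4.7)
The plan is to build an adaptive adversarial input in the original model and argue that any deterministic online algorithm must at some point store load at least $13/6-\eps$ on a single machine, while the offline optimum can handle the same input with buffer size at most $1$. Since the lower bound of $2$ for $K_3+e$ follows immediately from the induced $P_3$ on $\{m_1,m_3,m_4\}$, the real content is amplifying the ratio from $2$ to $13/6$ by exploiting the extra edge of the triangle.

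My approach exploits two structural features. First, the triangle $\{m_1,m_2,m_3\}$ admits at most one running machine at a time, so any symmetric load on $m_1$ and $m_2$ forces the online algorithm to commit to a delay on one of them which offline can avoid by scheduling the two sequentially. Second, $m_4$ is blocked only by $m_3$, so offline can pair it with $m_1$ or $m_2$ whenever it wishes, which is the key to driving a delay over $m_3$. Starting from these, the construction runs in three conceptual phases. In the \emph{balancing phase}, arrivals on $m_1$ and $m_2$ force the online algorithm to leave at least $1/2$ of delay on one of the two; by symmetry we may assume it is $m_1$, so the adversary now treats $m_1$ as the heavier triangle machine. In the \emph{$P_3$ phase}, arrivals alternate on $m_1$ and on $m_4$ (paired by offline with either $m_2$ or nothing) so that while offline keeps both of them empty, the online algorithm, still paying the triangle conflict, accumulates delay approaching $1$ on $m_3$ via the bookkeeping invariants $d_1+d_2+d_3\le 1$ and $d_3+d_4\le 1$ turned against it. Finally, in the \emph{killer block} the adversary sends a carefully tuned amount of load on $m_3$ and on the lighter triangle machine: whichever branch the online algorithm takes—draining $m_3$ and letting $\{m_1,m_2\}$ pile up, or draining the triangle and letting $m_3$ grow—some single buffer reaches $13/6-\eps$.

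Offline feasibility will be verified by exhibiting an explicit schedule that alternates among the independent sets $\{m_1,m_4\}$, $\{m_2,m_4\}$, and the singleton $\{m_3\}$, matched to the arrivals, and showing by direct bookkeeping that no buffer exceeds $1$. The main obstacle is the adaptive case analysis: after each block of arrivals the online algorithm has several reasonable responses (drain a triangle machine, run a pair involving $m_4$, or run $m_3$ alone), and the next block must be chosen to force a bad state in every branch. A symmetry reduction, applied after the balancing phase, lets us WLOG pin which of $m_1,m_2$ is heavy; but after that, the numerical tuning of the block sizes—producing the fractions $1/6$ and $5/6$ that give the $13/6$ constant—is delicate and requires tracking $d_3$ and $d_1+d_2$ together, since it is precisely the interaction of these two quantities (bounded by $1$ each on the offline side, but forced to grow jointly on the online side) that breaks the $2$ barrier.
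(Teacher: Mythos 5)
There is a genuine gap: what you have written is a plan whose decisive step is missing. The entire content of the theorem beyond the easy bound of $2$ is the ``killer block'' you defer to -- the exact block sizes (your $1/6$ and $5/6$), the adaptive case analysis over every possible response of the online algorithm, and the verification that in every branch some buffer reaches $13/6-\eps$ while \Opt never exceeds $1$. None of this is carried out, and you yourself flag it as ``delicate.'' Moreover, two of the intermediate claims do not stand as stated. First, the invariants $d_1+d_2+d_3\le 1$ and $d_3+d_4\le 1$ are properties that a \emph{particular} algorithm (\PrioGreedy) is shown to maintain in the upper-bound analysis; they are not constraints on an arbitrary online algorithm, so a lower-bound argument cannot ``turn them against'' the adversary's opponent -- the construction must defeat every algorithm, including ones that violate them. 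Second, in your ``$P_3$ phase'' load arrives only on $m_1$ and $m_4$, yet you claim the delay on $m_3$ approaches $1$; since $d_3=a_3-z_3$ and $z_3\ge 0$, a positive delay on $m_3$ requires unprocessed load actually sitting in $m_3$'s buffer, which cannot be created by arrivals on $m_1$ and $m_4$ (and your balancing phase only loaded $m_1,m_2$). As described, that phase cannot produce the state your killer block needs.

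For comparison, the paper's proof is not a one-shot spike to $13/6$ at all. It assumes a ratio $2+r$ with $0<r<1/6$ and uses a \emph{repeated} six-time-unit phase: load $(1,1,1,0)$ at time $0$, $(0,1,1,0)$ at time $2$, a unit on $m_4$ at times $3$ and $4$, and $(0,0,1,1)$ at time $5$, after which \Opt is empty again. Threatened follow-up loads on $m_1/m_2$ force $\min(a_1^{2^-},a_2^{2^-})\le \tfrac12+r$ and $a_2^{3^-}\le 1+r$, and the two units arriving on $m_4$ force the algorithm to spend at least $2-r$ time on $m_4$ (which blocks $m_3$); bookkeeping then shows $a_3$ grows by at least $1/2-3r>0$ per phase, so the algorithm is not competitive at all, contradicting the assumed ratio. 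If you want to salvage your three-phase outline, you would either need to make the terminal case analysis fully explicit with concrete numbers and an explicit feasible \Opt schedule, or switch to a pumping argument of this kind; as it stands the proposal does not establish the bound.
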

\begin{proof} 
For a contradiction, we consider an algorithm that maintains a competitive ratio of $2+r$ for some value $0<r<1/6$.
We give this algorithm the following input which is repeated many times. In this table, the variable $x$ represents the value of $a_3$ at the beginning of a phase and $*$ indicates that the exact load does not matter. Recall that values of variables at a time $t^-$ ignore any job arrivals at time $t$, whereas values at $t$ include such arrivals. 
When the input starts, $x=0$. We will show that in each phase (that is, each repetition of the input) $x$ increases by at least $1/2-3r>0$. 

\begin{table}[ht]
\centering
\resizebox{0.57\textwidth}{!}{
\begin{tabular}{lccccccccc}
\textbf{Time} & \textbf{Input} & $a_1$ & $a_2$ & $a_3$ & $a_4$ & $z_1$ & $z_2$ & $z_3$ & $z_4$\\ 
\hline 
$0^-$&&0&0&$x$&$*$&0&0&0&$*$\\ 
$0$&$(1,1,1,0)$&1&1&$1+x$&$*$&1&1&1&$*$\\ 
$1^-$&&&&&&1&1&0&$*$\\ 
$1$&&&&&&1&1&0&$*$\\ 
$2^-$&&$*$&$y$&$1+x-2y$&$*$&1&0&0&0\\ 
$2$&$(0,1,1,0)$&$*$&$1+y$&$2+x-2y$&$*$&1&1&1&0\\ 
$3^-$&&$*$&$1+r$&$1+x-r-y$&0&1&1&0&0\\ 
$3$&$(0,0,0,1)$&$*$&$1+r$&$1+x-r-y$&1&1&1&0&1\\ 
$4^-$&&&&&&1&0&0&0\\ 
$4$&$(0,0,0,1)$&&&&&1&0&0&1\\ 
$5^-$&&&&&&0&0&0&0\\ 
$5$&$(0,0,1,1)$&&&&&0&0&1&1\\ 
$6^-/0^-$&&0&0&$1+x-2r-y$&$1+r$&0&0&0&1\\
\end{tabular}
}
\end{table}

At time 2, the load $(1,1,0,0)$ could arrive, as \Opt could be in the state $(0,0,1,0)$ at time $2^-$ (in contrast to what is shown in the table). This could be followed by another unit of load on machine 1 or 2 at time $3$. Since the competitive ratio of \Alg is assumed to be $2+r$, this shows that $a_1^{2^-}\le  \frac12+r$ or  $a_2^{2^-}\le  \frac12+r$. Without loss of generality, we assume for the remainder of the construction that $a_2^{2^-}\le\frac12+r$.

Let the amount of time that \Alg spends processing load on machine 3 in the interval $[0,2]$ be $2y$. Then $a_3^{2^-}=1+x-2y$. \Alg spends (at most) $2(1-y)$ time running machines 1 and 2, so $a_2^{2^-}\ge y$.
We have $a_2^{3^-}\le 1+r$ (else we continue with load $(0,1,0,0)$ as \Opt could have emptied $m_2$ in interval $[2,3]$).
In the interval $[2,3]$, \Alg therefore spends at least $y-r$ time on machine 2, so $a_3^{3^-}\ge 1+x-r-y$. In the interval $[3,6]$, \Alg spends $2-r$ time on machine 4, so $a_3^{6^-}\ge 1+x-2r-y$. This shows that during a phase, the value $a_3$ increases by at least $1-2r-y$.

We have $2y\le 1+x$ since \Alg spends $2y$ time processing load on $m_3$ in the interval $[0,2]$ and only $1+x$ load is available. By the bounds already shown we also have $y\le a_2^{2^-} \le \frac12+r$. Hence the increase in $a_3$ in each phase is at least $1-2r-\min(\frac{1+x}2,\frac12+r)=\max\{1/2-2r-x/2,1/2-3r\}.$

So as long as $r<1/6$ we get a constant increase in the value of $x$ of at least $1/2-3r$ per phase.
This shows that \Alg fails to maintain a competitive ratio below $2+r$.
\end{proof}

\paragraph{Flow Model}
In contrast to the original model, for the path $P_3$ the offline optimum can be matched by an online algorithm  
and hence the path does not yield a lower bound. However, the complete graph $K_3$ is also an induced subgraph of $K_3+e$ and yields a lower bound of $4/3.$

The following algorithm manages to match this bound.

\begin{algorithm}[H]
\label{alg:Fk3+e}
\KwData{$K_3+e$ graph, where $\text{deg}(m_3)=3$ and $\text{deg}(m_4)=1$}
\eIf{ 
(A) $\max(a_1,a_2,a_4)\ge4/3$ and one of these machines with load $\ge4/3$ receives load or $\max(a_1,a_2,a_4)>4/3$}
{run $m_4$ and run \Greedy on $m_1$ and $m_2$;}
{\eIf{ 
(B) $a_3>1/3$ or $a_1=a_2=0$}
    {run $m_3$;}
      {run \Greedy on $m_1,m_2,m_3$ and run $m_4$ if possible}}
  \caption{\PrioGreedyf}
\end{algorithm}

It will be established that the last condition in Case (A) is never satisfied, because this algorithm is $4/3$-competitive.
We will show that our algorithm maintains the following two invariants.
\begin{align}
\label{inv:1}
    d_1+d_2+d_3&\leq 0\\
\label{inv:2}
    d_3+d_4&\leq 1/3 
\end{align}

\begin{lemma}
\label{lem:invfalse}
    Invariant (\ref{inv:1}) can only become false if $a_1=a_2=0$ and $a_4=4/3$ and $m_4$ receives load. As long as (\ref{inv:2}) holds, (\ref{inv:1}) remains true. Invariant (\ref{inv:2}) can only become false if $\max(a_1,a_2)\ge4/3$, $\min(a_1,a_2)<1/3$, $a_4=0$ and $m_4$ does not receive load. 
\end{lemma}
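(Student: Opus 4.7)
The plan is to analyse, case by case, the rate of change of each invariant sum under the dynamics of \PrioGreedyf. For any set $S$ of machines,
\[
\frac{d}{dt}\sum_{i\in S}d_i \;=\; (\text{\Opt's effective speed on }S)\;-\;(\text{\Alg's effective speed on }S),
\]
where a machine contributes to effective speed only if it has positive load or is currently receiving load. Because $\{m_1,m_2,m_3\}$ is a clique and $\{m_3,m_4\}$ is an edge, both rates are at most $1$, so a strict increase of the relevant sum requires \Alg's effective speed on $S$ to vanish. I proceed by induction over time, considering the first moment either invariant fails, and in each branch (A), (B), (C) of \PrioGreedyf identify exactly when this can happen.

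For invariant~(\ref{inv:1}) with $S=\{m_1,m_2,m_3\}$: branch (C) is entered only with $a_1>0$ or $a_2>0$, so \Greedy on $\{m_1,m_2,m_3\}$ processes some machine effectively. Branch (B) runs $m_3$, ineffective only when $a_3=0$ and $m_3$ receives no load; the (B)-trigger then forces $a_1=a_2=a_3=0$, where $d_i=-z_i\le 0$ for $i\le 3$ and the sum is trapped at or below $0$. Branch (A) ignores $m_3$; \Greedy on $\{m_1,m_2\}$ is ineffective exactly when $a_1=a_2=0$ and neither receives load, and the (A)-trigger then forces $a_4\ge 4/3$ with either $a_4>4/3$ or $m_4$ receiving load. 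The sub-case $a_4>4/3$ is excluded by the inductively-maintained (\ref{inv:2}): $d_4>1/3$ would force $d_3<0$ and so $d_1+d_2+d_3<0$ strictly. This leaves the configuration claimed by Fact A, and the second statement of the lemma is then immediate: in that configuration $d_4\ge 1/3$, so (\ref{inv:2}) gives $d_3\le 0$, hence $d_1+d_2+d_3\le 0$.

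For invariant~(\ref{inv:2}) with $S=\{m_3,m_4\}$, the analogous branch analysis goes as follows. Branch (A) ineffectiveness on $S$ is equivalent to $a_4=0$ with $m_4$ not receiving load, and the (A)-trigger (given $a_4=0$) yields $\max(a_1,a_2)\ge 4/3$. Branch (B) ineffectiveness forces $a_1=a_2=a_3=0$ with $m_3$ not receiving load; here $d_3=-z_3\le 0$, so any rise of $d_3+d_4$ comes via $d_4$, and I bound this rise by combining $a_4<4/3$ (since branch (A) did not fire) with $z_4\le 1$ and the continuity of the algorithm's trajectory from the preceding state. Branch (C) ineffectiveness on $\{m_3,m_4\}$ must come from \Greedy picking $m_1$ or $m_2$ while $m_4$ is idle; the (C)-constraint $a_3\le 1/3$ bounds $d_3$ below $1/3$. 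The main obstacle is extracting $\min(a_1,a_2)<1/3$ in the branch-(A) case: the plan is to exploit \Greedy's balancing on $\{m_1,m_2\}$ together with the history of entry into branch (A), arguing that if both $a_1,a_2\ge 1/3$ simultaneously then \Greedy processes them both effectively long enough that branch (A) would deactivate before $a_3$ (which only accumulates while branch (A) is active and $m_3$ is left idle) could grow to the $1/3$ level required for $d_3+d_4$ to cross $1/3$. This history-tracking across branch switches is where I expect the bulk of the technical effort to lie.
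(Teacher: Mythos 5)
Your general framework (the rate of change of a delay sum over a clique or edge is \Opt's effective speed minus \Alg's effective speed, so an increase requires \Alg to be ineffective on that set) is exactly the reasoning the paper uses implicitly, and your treatment of invariant~(\ref{inv:1}) — including ruling out the sub-case $a_4>4/3$ via (\ref{inv:2}) and deducing the second sentence — matches the paper's argument. The genuine gap is in the third claim, precisely where the paper does its real work. You never prove $\min(a_1,a_2)<1/3$; you only announce a dynamic ``history-tracking'' plan about \Greedy's balancing and branch switches and admit it is the bulk of the missing effort. The paper gets this in two lines by a static deduction at the failure moment: since $a_4=0$ gives $d_4\le0$, a violation of (\ref{inv:2}) forces $d_3>1/3$; since $\max(a_1,a_2)\ge4/3$ and $z\le1$, the heavier of $m_1,m_2$ has delay at least $1/3$; invariant~(\ref{inv:1}) (still valid there, by the second sentence) then forces the other delay below $-2/3$, i.e.\ that load is below $1/3$, and this persists to just before the failure because \Greedy is running the heavier machine, so the lighter load is non-decreasing. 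Your sketched alternative is not only unnecessary but dubious as stated: $a_3$ need not ``grow while branch (A) is active'' — it can already exceed $1/3$ when (A) is entered — and nothing in the sketch rules out (A) remaining active with both $a_1,a_2\ge4/3$, so the claimed deactivation-before-$a_3$-grows argument has no visible proof route.

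A second concrete defect is your branch-(B) case for (\ref{inv:2}), i.e.\ $a_1=a_2=a_3=0$ with $m_3$ idle while $m_4$ has load or receives load. You are right to worry that running an empty, idle $m_3$ is ineffective on $\{m_3,m_4\}$ (the paper dispatches this case in one clause, asserting that one of $m_3,m_4$ is run and hence $d_3+d_4$ does not increase), but the ingredients you propose — $a_4<4/3$, $z_4\le1$ and continuity — do not yield $d_3+d_4\le1/3$: they only bound $d_4$ below $4/3$, and $d_4$ can genuinely grow at rate $1$ in this configuration while \Alg runs $m_3$ to no effect. As written this step would fail, so it needs an actual argument about what the algorithm and the input can do in that state rather than a continuity appeal. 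In summary: correct skeleton and correct handling of (\ref{inv:1}), but the two substantive steps for (\ref{inv:2}) — the derivation of $\min(a_1,a_2)<1/3$ and the branch-(B) bound — are missing or would not go through as proposed.
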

\begin{proof}
    Both invariants hold at the start.
    If (A) is not satisfied, at least one machine of $m_1,m_2$ or $m_3$ is run and hence $d_1+d_2+d_3$ does not increase, or $a_1=a_2=a_3=0$ which immediately implies (\ref{inv:1}). We see that (\ref{inv:1}) can only become false while (A) is satisfied and only if \Greedy runs none of the machines $m_1,m_2,m_3$. This implies $a_1=a_2=0$, $a_4\ge4/3$ and $m_4$ receives load. By (\ref{inv:2}), $d_3\le0$ at this time, so (\ref{inv:1}) continues to hold as long as (\ref{inv:2}) holds.
    
    We now consider (\ref{inv:2}).
    If (A) is not satisfied, then if $a_4>0$, at least one machine $m_3$ or $m_4$ is run and hence $d_3+d_4$ does not increase, and while $a_4=0$ the value $d_3+d_4$ can only increase while $a_3\le1/3$, implying (\ref{inv:2}). So (\ref{inv:2}) can only become false while (A) is satisfied. If additionally $a_4>0$ or $m_4$ receives load, then $m_4$ is run and $d_3+d_4$ does not increase. Hence for (\ref{inv:2}) to become false we must have $a_4=0$ and $m_4$ does not receive load. To satisfy (A) with $a_4=0$ we must have $\max\{a_1,a_2\}\ge4/3$. Wlog let $a_2=\max\{a_1,a_2\}$.
    Once invariant (\ref{inv:2}) becomes false, we have $d_3>1/3,d_2\ge 1/3$ and hence $d_1<-2/3$ by invariant (\ref{inv:1}) and it follows $a_1<1/3$. Because $a_1$ does not decrease as we run $m_2$ (which is needed for (\ref{inv:2}) to become false), this was also true just before invariant (\ref{inv:2}) became false.
\end{proof}
\begin{lemma}
\label{lem:notbelow}
During execution of the algorithm the following holds.
\begin{thmlist}
    \item\label{lem:notbelow1} If $a_3\ge1/3$ and $\max(a_1,a_2)>1/3$, then  $a_3\ge1/3$ is maintained as long as $\max(a_1,a_2)>1/3$. 
    \item\label{lem:notbelow2} If $\min(a_1,a_2)\ge1/3$,  then this is maintained as long as $\max(a_1,a_2)\ge1/3$.
    \item\label{lem:notbelow3} While $a_3>1/3$ and $\max(a_1,a_2)<4/3$, if $a_4\ge0$, then $a_4\ge0$ continues to hold. 
    \item\label{lem:notbelow4} While $a_3>1/3$ and $a_4<4/3$, if $\max(a_1,a_2)\ge4/3$,  then $\max(a_1,a_2)\ge4/3$ continues to hold. 
\end{thmlist}
\end{lemma}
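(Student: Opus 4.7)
My overall plan is a case analysis on which branch of \PrioGreedyf is active at the critical moment. For each of the four parts the quantity in question can decrease only while a specific machine is being run without simultaneously receiving load, so the task reduces to checking that at each threshold the algorithm ceases running that machine.

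For part~1 I would track when $m_3$ is run. In case~(A) it is never run; in case~(B) the triggering clause $a_1=a_2=0$ is excluded by the hypothesis $\max(a_1,a_2)>1/3$, so~(B) is active only while $a_3>1/3$ strictly; in the last branch \Greedy selects a maximum-load machine among $\{m_1,m_2,m_3\}$, and the hypothesis $\max(a_1,a_2)>1/3\ge a_3$ rules out $m_3$. Once $a_3$ descends to $1/3$ from above, every branch ceases to run $m_3$, so $a_3$ is pinned at $1/3$.

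For part~2 I would use that $m_1$ and $m_2$ are adjacent, so \Greedy runs at most one of them at a time, and by the same local argument as in Lemma~\ref{lem:greedy} the strictly smaller-loaded of the two is never selected. Thus $\min(a_1,a_2)$ can decrease only while $a_1=a_2$, and then both values fall together; the instant $\min(a_1,a_2)$ would cross below $1/3$, so does $\max(a_1,a_2)$, and the hypothesis ceases to apply. The ``receiving load'' tie-break still needs attention: if only the smaller machine receives load its value actually grows, while if only the larger machine receives load the smaller one is untouched, so the minimum cannot undershoot in either case.

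Parts~3 and~4 are symmetric, and I would do part~4 first as a template. Under $a_3>1/3$ and $a_4<4/3$, case~(A) is active precisely when $\max(a_1,a_2)>4/3$, or $\max(a_1,a_2)=4/3$ with load arriving on the witnessing machine; otherwise~(B) fires (since $a_3>1/3$) and $m_3$ runs, leaving $m_1$ and $m_2$ untouched. Inside~(A), \Greedy on $\{m_1,m_2\}$ only reduces the larger of the two. The moment $\max(a_1,a_2)$ reaches $4/3$ without ongoing load arrival on that machine, (A) switches off and~(B) takes over, pinning $\max(a_1,a_2)$ at $4/3$. Part~3 follows the same template with $m_4$ in the role of the maximum: (A) activates $m_4$ and drives $a_4$ toward $4/3$, after which~(B) takes over and freezes $a_4$.

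The main obstacle will be the bookkeeping at the switching boundaries $a_3=1/3$, $\max(a_1,a_2)=4/3$ and $a_4=4/3$, where the activity of cases~(A) and~(B) toggles. Because the model is continuous in time, I will need a short continuity argument at each boundary to guarantee the case switch happens without a strict undershoot of the threshold. The interplay with the ``receiving load'' priority of \Greedy also deserves careful treatment in each branch, but once one writes down what \Greedy actually selects in each configuration, no idea beyond the case analysis sketched above should be required.
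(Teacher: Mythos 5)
Your proposal is correct and takes essentially the same route as the paper: a branch-by-branch check of \PrioGreedyf showing that at each threshold ($a_3=1/3$, $\min(a_1,a_2)=1/3$ via the tie rule of \Greedy, $a_4=4/3$, $\max(a_1,a_2)=4/3$) the relevant machine is either not run or is simultaneously receiving load, so the quantity is pinned there. Note also that your reading of part 3 with threshold $4/3$ matches the paper's proof and later uses of the lemma (the "$a_4\ge0$" in the statement is evidently a typo for $a_4\ge4/3$).
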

\begin{myproof}
\begin{enumerate}[label=\textit{\arabic*}.]
    \item Machine $m_3$ is not run if (A) holds. If (B) holds, we have $a_3>1/3$. Finally, if (A) and (B) do not hold $a_3$ cannot drop below $1/3$ as long as $\max(a_1,a_2)>1/3$, since \Greedy would run one of those machines.
    \item \Greedy is used to decide which machine $m_1$ or $m_2$ is run. As long as $\min(a_1,a_2)<\max(a_1,a_2)$ the machine with load $\max(a_1,a_2)$ is run. This changes once  $\min(a_1,a_2)=\max(a_1,a_2),$ at which time both machines are run.
    \item Condition (B) holds and hence if (A) does not hold $m_4$ is not run. If (A) holds, then by $\max(a_1,a_2)<4/3$ we have $a_4\ge4/3$ and if $a_4=4/3$, $m_4$ must receive load. It follows that its load does not drop below $4/3.$
    \item  Condition (B) holds and hence if (A) does not hold machines $m_1$ and $m_2$ are not run. If (A) holds, then by $a_4<4/3$, there must be at least one machine $m_1$ or $m_2$ with load at least $4/3$. If $\max\{a_1,a_2\}=4/3$, at least one of these machines receives load. It follows that its load does not drop below $4/3.$\hfill \qed
\end{enumerate}
\end{myproof}

\begin{definition}
    A \emph{critical interval} is a maximal open interval
in which we have $\max(a_1,a_2)>1/3$ and $\min(a_1,a_2)<1/3$ throughout.
\end{definition}
A critical interval is always bounded because the input ends at some point, so the buffers eventually become empty. 

\begin{lemma}
    \label{lem:d3t1}
    Let $(t_1,t_2)$ be a critical interval.
    We have $\max(a_1,a_2)\le4/3$ throughout the critical interval, and 
    $a_2^{t_1}=1/3$.
    If (\ref{inv:2}) holds at time $t_1$, then $d_3^{t_1}\le1/3$.
    If additionally $a_3^{t_1}>1/3$, then $d_3^{t_1}\le0$ and $a_4^{t_1}\ge4/3$. 
\end{lemma}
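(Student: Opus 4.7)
My plan is to establish the four assertions in sequence, using continuity at $t_1$, the structural Lemmas \ref{lem:notbelow} and \ref{lem:invfalse}, and the invariants (\ref{inv:1})--(\ref{inv:2}). For the bound $\max(a_1,a_2)\le 4/3$, I would argue by continuity: any increase of (say) $a_2$ past $4/3$ would require $m_2$ to be receiving load at the instant $a_2=4/3$, which triggers branch (A) of \PrioGreedyf; inside the critical interval $a_1<1/3<a_2$, so \Greedy on $\{m_1,m_2\}$ picks $m_2$ and blocks further growth. The same strict separation $a_1<a_2$, combined with Lemma \ref{lem:notbelow2}, also shows that the labels $m_1$ and $m_2$ cannot swap min/max roles during $(t_1,t_2)$, since a swap would force $\min(a_1,a_2)\ge 1/3$ somewhere in the interval. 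For $a_2^{t_1}=1/3$: Lemma \ref{lem:notbelow2} forbids entering the critical interval from $\min(a_1,a_2)\ge 1/3$, so just before $t_1$ we had $\max(a_1,a_2)\le 1/3$; continuity at $t_1$ then forces $\max=1/3$, and I label the attaining machine $m_2$.

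The inequality $d_3^{t_1}\le 1/3$ is immediate when $a_3^{t_1}\le 1/3$, since $d_3=a_3-z_3\le a_3\le 1/3$. The case $a_3^{t_1}>1/3$ is the core of the argument, and I would prove the stronger conclusion $d_3^{t_1}\le 0$ and $a_4^{t_1}\ge 4/3$ simultaneously. The strategy is to propagate $a_4\ge 4/3$ forward from the most recent upward crossing $\tau\le t_1$ of $a_3$ through $1/3$, up to $t_1$. Two facts suffice. First, while $a_3>1/3$, condition (B) of \PrioGreedyf is always satisfied, so the algorithm runs either branch (A) or branch (B); under (B), $m_4$ is not run and $a_4$ is therefore non-decreasing, while under (A) the definition forces $a_4\ge 4/3$ (since $a_1,a_2\le 1/3$ in a left-neighborhood of $t_1$). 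Second, an upward crossing of $a_3$ through $1/3$ cannot occur in branch (B) (which runs $m_3$) and cannot occur in the fall-through branch either: Lemma \ref{lem:greedy} ensures that $m_3$ is a joint maximum among $\{m_1,m_2,m_3\}$ whenever it would otherwise grow, and branch (B) activates the moment $a_3$ strictly exceeds $1/3$. Hence the crossing must happen in branch (A), giving $a_4\ge 4/3$ at $\tau$, which is then preserved up to $t_1$. Combining $a_4^{t_1}\ge 4/3$ with $z_4^{t_1}\le 1$ yields $d_4^{t_1}\ge 1/3$, and the assumed invariant (\ref{inv:2}) gives $d_3^{t_1}\le 1/3-d_4^{t_1}\le 0$.

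The main obstacle is the ``no upward crossing in the fall-through branch'' argument: one has to carefully rule out a non-trivial excursion of $a_3$ above $1/3$ produced by \Greedy ties, using Lemma \ref{lem:greedy} and the continuous-time structure of the flow model to show that branch (B) instantly captures any such crossing so that the net growth of $a_3$ past $1/3$ in fall-through is zero. Once this is settled, the remaining ingredients---stability of the min/max labels, non-decrease of $a_4$ under (B), the lower bound $a_4\ge 4/3$ under (A), and the final combination with invariant (\ref{inv:2})---fit together directly.
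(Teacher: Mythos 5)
Your treatment of the first three assertions ($\max(a_1,a_2)\le4/3$, $a_2^{t_1}=1/3$, and the easy case $a_3^{t_1}\le1/3$) matches the paper's argument. The gap is in the core case $a_3^{t_1}>1/3$. You claim that whenever branch (A) is active between the last upward crossing $\tau$ of $a_3$ through $1/3$ and $t_1$ (in particular at the crossing itself), the trigger must be $a_4\ge4/3$, and you justify this by ``$a_1,a_2\le1/3$ in a left-neighborhood of $t_1$''. That justification only applies very close to $t_1$; the crossing $\tau$ and most of $(\tau,t_1)$ need not lie there, and condition (A) can perfectly well be triggered by $\max(a_1,a_2)\ge4/3$ (say $a_2=4/3$ and receiving load) while $a_4$ is small. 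In that scenario both of your propagation facts fail: (A) does not give $a_4\ge4/3$, and while (A) is active $m_4$ is run, so $a_4$ can decrease below $4/3$. So the forward propagation of $a_4\ge4/3$ from $\tau$ to $t_1$ does not get off the ground, and even if it held at $\tau$ it could be lost during a later (A)-phase triggered by $a_1$ or $a_2$.

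Ruling out the ``(A) is triggered by $a_2\ge4/3$'' scenario is precisely the crux of the paper's proof, and it uses an argument your proposal does not contain: assume $a_4<4/3$ on all of $(t_0,t_1)$ (the paper's $t_0$ is your $\tau$); then $a_1$ cannot drop below $1/3$ there, so $a_1<1/3$ throughout (since $a_1^{t_1}<1/3$); hence any growth of $a_3$ forces $a_2\ge4/3$ at some time, and Lemma \ref{lem:notbelow}.4 propagates $\max(a_1,a_2)\ge4/3$ forward to $t_1$, contradicting $a_2^{t_1}=1/3$. This yields a time $t'\in(t_0,t_1)$ with $a_4=4/3$, and then a second, alternation argument (Lemma \ref{lem:notbelow}.3 versus Lemma \ref{lem:notbelow}.4: $a_4=4/3$ is preserved unless $\max(a_1,a_2)$ reaches $4/3$, in which case that load is preserved until $a_4=4/3$ again) shows that at $t_1$, where $\max(a_1,a_2)=1/3$, we must have $a_4^{t_1}=4/3$, whence $d_4^{t_1}\ge1/3$ and $d_3^{t_1}\le0$ by (\ref{inv:2}). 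Without the contradiction via $a_2^{t_1}=1/3$ and without the hand-off argument between $a_4$ and $\max(a_1,a_2)$, the conclusions $a_4^{t_1}\ge4/3$ and $d_3^{t_1}\le0$ are not established.
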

\begin{proof}
    Without loss of generality, let $a_2=\max(a_1,a_2)$ in this interval.  Then $a_1<1/3<a_2$ in the entire interval. We have $a_1^{t_1}<1/3$ (by Lemma \ref{lem:notbelow}.2 and the previous line) and hence $a_2^{t_1}=1/3$ by the maximality of a critical interval.
    By condition (A), $a_2\le4/3$ is maintained during this interval: $m_2$ will always be running if $a_2\ge4/3$ and $m_2$ is receiving load, since $a_1<1/3$.
    
    If $a_3^{t_1}\le1/3$ there is nothing more to show.
    Suppose $a_3^{t_1}>1/3$. Let $t_0$ be the last time before $t_1$ such that $a_3^{t_0}=1/3$.
    In this case condition (B) was satisfied in the interval $(t_0,t_1)$, but still $a_3$ increased since then. 

Suppose for a contradiction that $a_4<4/3$ in the entire interval $(t_0,t_1)$.
Then $a_1$ does not decrease below $1/3$ in $(t_0,t_1)$ since $a_3>1/3$ and because (A) can only be true in this case if $a_2=4/3$, and then $a_1$ also does not decrease below $1/3$. Because $a_1^{t_1}<1/3$ we have $a_1<1/3$ in the entire interval $(t_0,t_2)$.
We have $a_3>1/3$ during $(t_0,t_1)$ and $a_3$ increased during some time in $(t_0,t_1)$, meaning that $a_2\ge4/3$ for some time in $(t_0,t_1)$, since $a_1<1/3$ and by assumption $a_4<4/3$ throughout. 
By Lemma \ref{lem:notbelow}.4 it follows that $a_2^{t_1}=4/3$, contradicting the definition of $t_1$. Hence the assumption $a_4<4/3$ was wrong and 
there was a time $t'\in (t_0,t_1)$ at which $a_4=4/3$. 
From this point on $a_4$ remains equal to $4/3$ by Lemma \ref{lem:notbelow}.3 unless $\max(a_1,a_2)=4/3$ at some point. In such a case, $a_4$ can only drop below $4/3$ if $a_1$ or $a_2$ remains $4/3$, and then that load cannot drop below $4/3$ anymore (unless $a_4$ becomes equal to $4/3$ again) by Lemma \ref{lem:notbelow}.4. However, since $a_1^{t_1}<1/3$ and $a_2^{t_1}=1/3$, we must have $a_4^{t_1}=4/3$.

Therefore $d_4^{t_1}\ge1/3$ and by the invariant $d_3+d_4\le1/3$, we have $d_3^{t_1}\le0$. 
\end{proof}

\begin{observation} 
    In a critical interval, whenever $a_4$  increases, $d_3$ does not increase because \PrioGreedyf is running $m_3$.
If we reach $a_4=4/3$ again, then $d_3\le0$ as long as invariant \ref{inv:2} holds. 
\end{observation}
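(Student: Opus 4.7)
Both claims follow from a short case analysis of \PrioGreedyf combined with the invariants established earlier.

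For the first claim, the approach is to rule out every branch of \PrioGreedyf in which $m_4$ can sit idle while receiving load, except case (B) with $a_3>1/3$. In a critical interval we have $\max(a_1,a_2)>1/3$, so the alternative clause $a_1=a_2=0$ inside (B) is excluded. Branch (A) explicitly runs $m_4$, so (A) must be false at any moment where $a_4$ strictly increases. In the third (else) branch, $m_4$ is run unless \Greedy on $\{m_1,m_2,m_3\}$ picks $m_3$; but for \Greedy to select $m_3$ over $m_1$ and $m_2$ we would need $a_3\ge\max(a_1,a_2)>1/3$, which would have already triggered case (B) and precluded our being in the else branch. Hence the only remaining possibility is case (B) with $a_3>1/3$, in which $m_3$ is run. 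Because $m_3$ runs online at speed $1$ while any offline processing of $m_3$ occurs at speed at most $1$, the rate $\dot d_3=\dot a_3-\dot z_3$ is nonpositive, and thus $d_3$ does not increase.

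For the second claim, the plan is to combine $z_4\le 1$ with invariant (\ref{inv:2}). At any instant with $a_4=4/3$, we have $d_4=a_4-z_4\ge 4/3-1=1/3$. Substituting into $d_3+d_4\le 1/3$ immediately yields $d_3\le 0$.

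The only delicate point is the tie-breaking convention inside \Greedy: one must confirm that \Greedy's preference for machines receiving load cannot override the strict load comparison, so that in the third branch with $a_3<\max(a_1,a_2)$ it genuinely selects a machine in $\{m_1,m_2\}$ and permits $m_4$ to run alongside it. Once this is noted, no further computation is needed, and both halves of the observation drop out of the case split.
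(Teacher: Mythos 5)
Your proof is correct and matches the paper's intended reasoning: the paper states this as an observation without a separate proof, and the justification it relies on is exactly your case split (in a critical interval $\max(a_1,a_2)>1/3$ rules out the clause $a_1=a_2=0$, branch (A) and the else-branch both run $m_4$, so $a_4$ can only increase under (B) with $a_3>1/3$, where $m_3$ runs and $\dot d_3\le 0$) together with $z_4\le 1$ and invariant (\ref{inv:2}) giving $d_3\le 1/3-d_4\le 0$ when $a_4=4/3$. Your remark on \Greedy's tie-breaking is also handled correctly, since $a_3\le 1/3<\max(a_1,a_2)$ is a strict inequality in the else-branch, so no tie arises.
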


\begin{lemma}
\label{lem:FK3+eUB}
Algorithm \PrioGreedyf maintains (\ref{inv:1}) and (\ref{inv:2}).
\end{lemma}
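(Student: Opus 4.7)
The plan is to show by a first-failure argument that neither invariant is violated. Both invariants hold at $t=0$ and, by Lemma~\ref{lem:invfalse}, invariant~(\ref{inv:1}) is preserved as long as (\ref{inv:2}) is. Hence it suffices to prove that (\ref{inv:2}) never fails. Suppose for contradiction that $t^{*}$ is the first violation of (\ref{inv:2}). By Lemma~\ref{lem:invfalse}, at $t^{*}$ we have $\max(a_1,a_2)\ge 4/3$, $\min(a_1,a_2)<1/3$, $a_4=0$, and $m_4$ does not receive load. The first two conditions place $t^{*}$ strictly inside a critical interval $(t_1,t_2)$; without loss of generality $a_2\ge a_1$ throughout. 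Lemma~\ref{lem:d3t1} then yields $a_2^{t_1}=1/3$, $\max(a_1,a_2)\le 4/3$ throughout, and $d_3^{t_1}\le 1/3$.

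The next step is to track the evolution of $d_3+d_4$ on $(t_1,t^{*}]$. Its rate of change equals the difference between \Opt's and \PrioGreedyf's processing rates on the edge $\{m_3,m_4\}$. Inspection of the algorithm shows that \PrioGreedyf always selects at least one of $m_3,m_4$: under (A) it runs $m_4$, under (B) it runs $m_3$, and in the else-branch \Greedy picks one of $m_1,m_2,m_3$ with $m_4$ run whenever $m_3$ is not. Thus $d_3+d_4$ can grow only when \PrioGreedyf's selected machine in $\{m_3,m_4\}$ is empty and receives nothing while \Opt processes the other, which matches the configuration at $t^{*}$ (\PrioGreedyf selects $m_4$ under (A) but $a_4=0$).

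In the favorable subcase $a_3^{t_1}>1/3$, Lemma~\ref{lem:d3t1} further gives $d_3^{t_1}\le 0$ and $a_4^{t_1}\ge 4/3$, so $d_4^{t_1}\ge 1/3$. Using the Observation preceding this lemma (that $d_3$ does not increase whenever $a_4$ does, and that $d_3\le 0$ whenever $a_4$ returns to $4/3$), the plan is to split $(t_1,t^{*}]$ into sub-intervals bounded by visits of $a_4$ to the level $4/3$; on each such sub-interval $a_4$ strictly decreases from $4/3$ under branch (A), so every drop of $a_4$ corresponds to a drop of $d_4$ that absorbs any simultaneous growth of $d_3$, keeping the sum at most $1/3$. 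The main obstacle is the subcase $a_3^{t_1}\le 1/3$, where no a priori lower bound on $a_4^{t_1}$ is available. Here one argues that $a_2$ cannot grow from $1/3$ under branch (A) (since \Greedy in (A) runs $m_2$, as $a_2\ge a_1$) nor under the else-branch while $a_3\le a_2$ (since \Greedy would pick $m_2$), so $a_3$ must first exceed $1/3$ at some intermediate time. From that point on the situation mirrors the favorable subcase and can be handled by a similar tracking argument, carefully using the inherited bound $d_3+d_4\le 1/3$, yielding the desired contradiction.
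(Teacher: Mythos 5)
Your setup mirrors the paper's: reduce everything to invariant~(\ref{inv:2}) via Lemma~\ref{lem:invfalse}, place the first potential violation inside a critical interval, and apply Lemma~\ref{lem:d3t1} with a case split on $a_3^{t_1}$. The core of the argument, however, is missing. The mechanism you propose --- ``every drop of $a_4$ corresponds to a drop of $d_4$ that absorbs any simultaneous growth of $d_3$'' --- only concerns times at which $a_4>0$ (or $m_4$ receives load), and at such times $d_3+d_4$ does not increase anyway; that is precisely the content of Lemma~\ref{lem:invfalse}, which you have already invoked. The invariant is at risk exactly when $a_4=0$, $m_4$ receives no load, and condition (A) holds because $a_2=4/3$ and $m_2$ is receiving load: then \PrioGreedyf runs $m_2$ and the empty machine $m_4$, \Opt can work on $m_3$, and $d_3+d_4$ genuinely increases with nothing left in $a_4$ or $d_4$ to absorb it. Your sketch gives no bound on how much $d_3$ can grow in this regime, so no contradiction at $t^{*}$ is actually derived. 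Note also that $d_3$ can first creep up to $1/3$ during the else-branch while $a_3\le 1/3$ (where $d_3\le a_3$ is the only bound) and then continue to grow under (A) once $a_3>1/3$, so a purely qualitative ``the sum stays at most $1/3$ on each sub-interval'' needs a quantitative justification.

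What closes this gap in the paper is a budget argument your proposal never touches: since $a_2^{t_1}=1/3$ and $a_2^{T}=4/3$, the load $a_2$ climbs by exactly $1$ on $[t_1,T]$ while $d_2+d_3$ does not increase (the algorithm always works on $m_2$ or $m_3$ there), so $z_2+z_3$ must increase by at least $1+a_3^{T}-a_3^{t_1}$; combined with $z_2\le 1$ this gives a lower bound on $z_3^{T}$. After $T$, one tracks $z_3$ and $a_3$ through the quantities $t_{23}^{+}$ and $t_{23}^{-}$ (total time during which both of $m_2,m_3$ do, respectively do not, receive load) and obtains $d_3\le 0$ when $a_3^{t_1}>1/3$ and $d_3\le 1/3$ when $a_3^{t_1}\le 1/3$, which together with $d_4\le 0$ at times with $a_4=0$ keeps (\ref{inv:2}) intact; returns of $a_4$ to $4/3$ are handled by restarting the argument at a new time $T$. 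Your treatment of the unfavorable subcase (``$a_3$ must first exceed $1/3$, then mirror the favorable case'') is likewise not a proof: without $a_4\ge 4/3$ and $d_3\le 0$ at that intermediate time the favorable-case reasoning does not transfer, and this is exactly why the paper's Case 1 requires its separate accounting ending in the weaker bound $d_3\le 1/3$ rather than $d_3\le 0$.
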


\begin{proof}
By Lemma \ref{lem:invfalse}, we only need to show that (\ref{inv:2}) is maintained. 
Let $(t_1,t_2)$ be a critical interval in which (\ref{inv:2}) holds at time $t_1$ and in which
there is a time at which $\max(a_1,a_2)=4/3$, $\min(a_1,a_2)<1/3$, $a_4=0$ and $a_3\ge1/3$, and let $T$ be the {first} such time in that interval. By Lemma \ref{lem:invfalse} the invariants are maintained in the interval $(t_1,T]$.
By Lemma \ref{lem:d3t1}, $a_2\le4/3$ throughout this interval.

By lemma \ref{lem:d3t1}, we have $a_2^{t_1}=1/3$, hence in the interval $[t_1,T]$, the value $a_2$ increases by 1. The value $d_2+d_3$ does not increase because the algorithm either works on $m_2$ or $m_3$ as $a_2>1/3$; the algorithm never works on $m_1$ because $a_1<1/3< a_2$.
We have $a_2^T-a_2^{t_1}=1$. There are two cases.

\paragraph*{Case 1: $a_3^{t_1}\le1/3$} In this case $a_3^T\ge a_3^{t_1}$ and we get
\[d_2^T+d_3^T\le d_2^{t_1}+d_3^{t_1} \Rightarrow
z_2^T+z_3^T\ge 1+a_3^T-a_3^{t_1}\ge 1+a_3^T-1/3.\] 

\paragraph*{Case 2: $a_3^{t_1}>1/3$}
After time $t_1$, $a_3$ decreases by exactly $a_3^{t_1}-a_3^T$ (this amount may be negative) whereas $a_2$ increases by exactly 1. The overall increase in the value $a_2+a_3$ is therefore $1+a_3^T-a_3^{t_1}$. 
Since $d_2+d_3$ does not increase in the interval $[t_1,T]$, we have that $z_2+z_3$ increases by at least the same amount. With Lemma \ref{lem:d3t1} we also have $d_3^{t_1}\le0$, which leads to \[z_2^T+z_3^T\ge a_3^{t_1}+1+a_3^T-a_3^{t_1}=1+a_3^T.\] 

By Lemma \ref{lem:notbelow}.1, $a_3\ge1/3$ in $[T,t_2]$ because $a_2>1/3$ in $[T,t_2)$.
Consider the machines $m_2$ and $m_3$ at a time $t\in[T,t_2)$. 
Suppose $a_4$ remains below $4/3$ in $[T,t].$
In any part of a critical interval in which $a_4\le4/3$, machine $m_2$ only runs if it receives load and $a_2=4/3$, or if $a_3\le 1/3$.
Let $t_{23}^+$ be the total length of time that both machines received load during $[T,t]$, and let $t_{23}^-$ be the total length of time that both machines  did not receive load during $[T,t]$.
During the remaining time (if any), exactly one of these machines receives load and $z_2+z_3$ does not decrease. If $m_2$ is the machine receiving load, $a_3$ does not increase; if $m_3$ is the machine receiving load, then \Alg runs $m_3$ using condition (B) because $a_3\ge1/3$ in $[T,t_2]$ (and because it does not run $m_2$ in this case) and again $a_3$ does not increase.

Since $z_2^t\le1$, we have 
\begin{equation}
\label{eq:z3t}
    z_3^t\ge t_{23}^+-t_{23}^-+
    \begin{cases}
        a_3^T-1/3 & \text{ if } a_3^{t_1}\le1/3\\
        a_3^T & \text{ if }a_3^{t_1}>1/3
    \end{cases}
\end{equation}
whereas 
\begin{equation}
    \label{eq:a3t}
    a_3^t\le a_3^T+t_{23}^+-t_{23}^-    
\end{equation}
as long as the right hand side of (\ref{eq:a3t}) is more than $1/3$
because \Alg works on $m_2$ when both $m_2$ and $m_3$ receive load and $a_2=4/3$, as well as when $a_2>4/3$, 
and on $m_3$ otherwise by condition (B), since $a_4<4/3$. 

By the definitions of $t_{23}^+$ and $t_{23}^-$ and because $a_2^T=4/3$, for $t\in[T,t_2)$ we also have
\begin{equation}
    \label{eq:a2a3}
    a_2^t+a_3^t = 4/3+a_3^T+t_{23}^+-t_{23}^-.
\end{equation}

If $a_3\le1/3$ and the right hand side of (\ref{eq:a3t}) is more than $1/3$, clearly (\ref{eq:a3t}) also holds.
During intervals in which the right hand side of (\ref{eq:a3t}) is at most $1/3$, $a_3$ remains equal to $1/3$ because $a_2>1/3$, so $d_3\le 1/3$. The value $a_3$ can only become larger again once $a_2=4/3$ again, which happens when $a_3^t=a_3^T+t_{23}^+-t_{23}^-$ by (\ref{eq:a2a3}). We conclude that (\ref{eq:a3t}) holds throughout.

Since both (\ref{eq:z3t}) and (\ref{eq:a3t}) hold, we have $d_3^t= a_3^t-z_3^t\le 0$ if $a_3^{t_1} > 1/3$, else $d_3^t\le 1/3$.

If $a_4\ge4/3$ at any point in $[T,t]$, then after this $d_3+d_4$ does not increase until possibly $a_4=0$ again, and this also requires that $\max(a_1,a_2)=4/3$ again. So in this case both invariants hold at least until the next time that 
$\max(a_1,a_2)=4/3, \min(a_1,a_2)<1/3, a_4=0$ and $a_3\ge1/3$ by the above. If this happens before time $t_2$, then we define this to be our new time $T$ and continue as above. In this case the value $t_1$ remains the same (so if we were previously in Case 2, we now still have $a_4^{t_1}\ge4/3$ and $d_3^{t_1}\le0$). Otherwise we are done with the current interval. 

This proof shows that (\ref{inv:2}) is maintained during any critical interval in which it holds at the start of that interval, so by Lemma \ref{lem:invfalse} both invariants hold throughout. 
\end{proof}

\begin{theorem}
\label{thm:FK3+eUB}
Algorithm \PrioGreedyf is $4/3$-competitive on $K_3+e.$ 
\end{theorem}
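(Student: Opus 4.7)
The plan is to leverage Lemma~\ref{lem:FK3+eUB}, which establishes that both invariants~(\ref{inv:1}) $d_1+d_2+d_3\le 0$ and (\ref{inv:2}) $d_3+d_4\le 1/3$ hold at all times. Since by assumption $z_i\le 1$ at every time, it suffices to prove $a_i\le 4/3$ for each machine $m_i$, which I handle case by case.

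For $m_4$, the argument is direct from condition~(A) of \PrioGreedyf: if $a_4$ ever reaches $4/3$ while $m_4$ receives load, the non-strict clause of~(A) fires so $m_4$ is run at full speed, matching the incoming rate and keeping $a_4$ at $4/3$; the strict clause similarly prevents $a_4>4/3$ from being sustained. Hence $a_4\le 4/3$ throughout. For $m_1$ and $m_2$, by symmetry consider $m_1$. The analogous threshold argument via~(A) combined with Greedy on $\{m_1,m_2\}$ handles the case $a_1\neq a_2$ since Greedy runs the strictly more loaded machine. The delicate case is the tie $a_1=a_2=4/3$ with both machines receiving load, where Greedy shares speed $1/2$. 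Here invariant~(\ref{inv:1}) together with $z_3\le 1$ forces $z_1+z_2\ge 5/3$; because $m_1$ and $m_2$ conflict OPT can process at most one at a time, while load arrives at combined rate $2$, so $z_1+z_2$ grows at rate~$1$ and hits its cap $2$ in time at most $1/3$. I account for the synchronous growth of $a_1+a_2$ and $z_1+z_2$ in this regime to conclude that $a_1,a_2\le 4/3$.

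For $m_3$, I use invariant~(\ref{inv:2}) in the form $a_3\le z_3+1/3-d_4$. Since $a_3$ can only strictly increase when $m_3$ is not running, I identify exactly when this happens: never in case~(B) (where $m_3$ runs) and never above $1/3$ in the "else" branch (because $\neg(\text{B})$ forces $a_3\le 1/3$ there). So $a_3$ can only grow beyond $1/3$ during case~(A). I split~(A) into two sub-cases. If~(A) fires via $a_4=4/3$ receiving load, then $d_4\ge 1/3$, so (\ref{inv:2}) gives $d_3\le 0$ and therefore $a_3\le z_3\le 1<4/3$. If~(A) fires via $a_1=4/3$ (or $a_2$) receiving load with $a_4<4/3$, the algorithm runs $m_1$ and $m_4$, while $m_3$ may receive load and grow. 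OPT is forced to process $m_1$ to keep $z_1\le 1$, and since $m_1$ conflicts with $m_3$, OPT cannot simultaneously process $m_3$, so $z_3$ grows in lockstep with $a_3$. The window closes once $z_3$ hits $1$; combined with the invariant bound on $a_3$ at the entry of the (A)-regime, this yields $a_3\le 4/3$.

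The main obstacle is the last sub-case of the $a_3$-bound: straightforward use of the invariants alone would only yield $a_3\le 7/3$, so the tightening to $4/3$ relies on verifying that $d_4\ge 0$ at the start of each (A)-via-$a_1$ phase (so that (\ref{inv:2}) pins down $a_3^{\text{start}}\le z_3^{\text{start}}+1/3$), and then on the OPT-conflict argument that forces $z_3$ to rise in lockstep with $a_3$ until the adversary must halt. I expect this bookkeeping to mirror the critical-interval analysis already carried out in Lemma~\ref{lem:d3t1} and Lemma~\ref{lem:FK3+eUB}, so most of the heavy lifting is inherited from those results, leaving only the packaging of the three-way case split above.
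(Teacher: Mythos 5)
There is a genuine gap, and it sits exactly where you locate the ``delicate case'' for $m_1,m_2$. At a tie $a_1=a_2=4/3$ with both machines receiving load, your accounting does not give $4/3$: invariant~(\ref{inv:1}) indeed forces $z_1+z_2\ge 5/3$, and the adversary can sustain simultaneous arrivals on the conflicting pair for at most $2-(z_1+z_2)\le 1/3$ further time, but during that window each of $a_1,a_2$ receives at rate $1$ while being processed at speed $1/2$, so each grows at rate $1/2$. The ``synchronous growth'' of $a_1+a_2$ and $z_1+z_2$ (both at rate $1$, i.e.\ $d_1+d_2$ constant) therefore only caps $a_1+a_2$ at $3$, hence $\max(a_1,a_2)\le 3/2$, not $4/3$. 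What is actually needed is that this state is unreachable, i.e.\ that $d_1+d_2>2/3$ never occurs. The paper establishes this by a backtracking argument your sketch does not contain: if $\max(a_1,a_2)>4/3$ then $a_1=a_2$ and $d_1+d_2>2/3$, so by~(\ref{inv:1}) $a_3<1/3$; since the last time $\max(a_1,a_2)=a_3$, machine $m_3$ cannot have run (condition~(B) needs $a_3>1/3$, and \Greedy never prefers $m_3$ while $\max(a_1,a_2)>a_3$), so $d_1+d_2$ did not increase since that moment, when it was at most $2a_3<2/3$ --- a contradiction. Without a step of this kind, your case analysis for $m_1,m_2$ cannot close at $4/3$.

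Two of the specific claims in your $a_3$-analysis are also unjustified as stated: there is no reason that $d_4\ge0$ at the start of an (A)-phase triggered by $a_1$ or $a_2$ (for instance $a_4=0$, $z_4=1$ is perfectly possible, and then~(\ref{inv:2}) only gives $d_3\le 1/3-d_4>1/3$), and ``\Opt is forced to process $m_1$'' holds only when $z_1=1$; while $z_1<1$, \Opt can work on $m_3$, so $z_3$ need not rise in lockstep with $a_3$ and $d_3$ can grow during the (A)-phase. The paper instead proves $d_3\le 1/3$ (hence $a_3\le z_3+1/3\le 4/3$) by reusing the critical-interval bookkeeping from Lemmas~\ref{lem:d3t1} and~\ref{lem:FK3+eUB} (the bounds on $z_3$ and $a_3$ via $t_{23}^+$ and $t_{23}^-$), together with Lemma~\ref{lem:notbelow} for intervals starting with $\min(a_1,a_2)=1/3$, and uses $d_4\ge1/3$ only in the sub-case $a_4\ge4/3$. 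Your instinct to inherit that machinery is right, and your treatment of $a_4\le4/3$ is fine, but the substitute arguments you propose do not hold, and in any case they do not repair the $m_1,m_2$ gap above.
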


\begin{proof}[Proof of \cref{thm:FK3+eUB}]
The value $a_4$ never increases above $4/3$ by condition (A). 
Suppose $\max(a_1,a_2)>4/3$ at some time. At this time, it must be that $a_1=a_2$ and both of these machines are receiving load by condition (A).
Then $d_1+d_2>2/3$. By Lemma \ref{lem:FK3+eUB} we have maintained $d_1+d_2+d_3\le0$ up until this point. Hence $d_3<-2/3$, meaning $a_3<1/3$. In this case $m_3$ has not been running since $\max(a_1,a_2)=a_3$ for the last time, because since that time $a_3$ did not decrease since (B) did not hold.
At that point $d_1+d_2\le2\max(a_1,a_2)=2a_3<2/3$, and $d_1+d_2$ did not increase from then on, which contradicts $d_1+d_2>2/3$. 
This shows that $\max(a_1,a_2)\le4/3$ as long as $d_1+d_2+d_3\le0$.

The only thing left to show is $a_3\le4/3$. This can only become false while (A) is satisfied, as otherwise $a_3$ does not increase while $a_3>1/3$.
If $a_4\ge4/3$, then $d_4\ge1/3$ so $d_3\le0$ by (\ref{inv:2}) and hence $a_3\le1$. Consider an open interval in which $\max(a_1,a_2)>1/3$. Suppose $\min(a_1,a_2)=1/3$ at the start of this interval. Then by Lemma \ref{lem:notbelow}.2 we have $\min(d_1,d_2)\ge-2/3$ in this entire open interval, and whenever $\max(d_1,d_2)\ge1/3$ this implies $d_3\le0-\max(d_1,d_2)-\min(d_1,d_2)\le1/3$ by (\ref{inv:1}).  

Otherwise, suppose $\min(a_1,a_2)<1/3$ at the start of this interval. 
This is an interval $(t_1,t_2)$ that we have considered before.
The proof of Lemma \ref{lem:FK3+eUB} shows that $d_3\le1/3$ after time $T$ as long as $a_4<4/3$. 
Starting from time $t_1$, there can be multiple intervals (maybe of 0 length) in which $a_4=4/3$. In such interval, we have $d_3\le0$ by (\ref{inv:2}), and between two intervals $d_3$ cannot increase above $1/3$ until we reach time $T$, because $\max(a_1,a_2,a_4)<4/3$ 
there so we are not in situation (A), 
so \PrioGreedyf runs $m_3$ whenever $a_3>1/3$.
This shows that $d_3\le0$ in $(t_1,T]$. If there is no time $T$, it shows that $d_3\le0$ in $(t_1,t_2)$. If we reach $a_4=4/3$ after time $T$, then just as above $d_3$ cannot increase above $1/3$ until we possibly reach the next time that we call $T$ (see the previous proof).
\end{proof}

\begin{theorem}\label{thm:k3+eLB}
    No randomized online algorithm can be better than $4/3$-competitive for $K_3+e.$
\end{theorem}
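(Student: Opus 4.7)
The plan is a direct induced-subgraph reduction to Theorem~\ref{thm:FKnLB} with $n=3$, using the fact that $H_3-\frac{1}{2}=1+\frac{1}{2}+\frac{1}{3}-\frac{1}{2}=\frac{4}{3}$. I would take the adversarial input from the proof of Theorem~\ref{thm:FKnLB} for $K_3$ and use it verbatim on $K_3+e$, placing load only on the triangle $\{m_1,m_2,m_3\}$ and never on the pendant vertex $m_4$.

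The only point that requires justification is that the presence of $m_4$ cannot help the online algorithm on such an input. Since no load ever arrives on $m_4$, we have $a_4^t=0$ throughout, regardless of the algorithm's decisions; hence running $m_4$ reduces no load while blocking its unique neighbor $m_3$. Formally, for any (randomized) schedule I would replace every instant at which $m_4$ runs by an instant at which $m_4$ idles (or at which any unblocked machine with positive load runs). The resulting schedule is still feasible, and the load on every machine at every time is no larger than before, so the expected maximum buffer does not increase. Therefore without loss of generality we may restrict to algorithms that never run $m_4$; any such algorithm acts exactly as an online algorithm for $K_3$ on the triangle, against the same input distribution.

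Applying Theorem~\ref{thm:FKnLB} with $n=3$ then immediately yields the bound $\tfrac{4}{3}$ for randomized algorithms against an oblivious adversary on $K_3+e$, and this matches the upper bound from Theorem~\ref{thm:FK3+eUB}. The only ``obstacle'' is the reduction step above, which is routine: a pendant vertex that never receives load is a pure liability for the online algorithm, since it only restricts which machines may run simultaneously without offering any load reduction in return.
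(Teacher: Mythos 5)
Your proposal is correct and follows exactly the paper's argument: the paper also proves this by noting that $K_3$ is an induced subgraph of $K_3+e$ and invoking Theorem~\ref{thm:FKnLB} with $n=3$, giving $H_3-\tfrac{1}{2}=\tfrac{4}{3}$. Your additional justification that a pendant vertex receiving no load cannot help the algorithm is a valid (and welcome) elaboration of the step the paper leaves implicit.
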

\begin{proof}
$K_3$ is an induced subgraph of $K_3+e$. The statement follows by the lower bound for complete graphs (Theorem \ref{thm:FKnLB}).
\end{proof}

\section{Conclusions}
Our results on complete partite graphs show that the number of vertices $n$ is certainly not the sole factor determining the competitive ratio. These results also show that the flow model is not significantly easier than the original model even though load appears more slowly; for complete $k$-partite graphs, the results are asymptotically the same. Generally, it seems that the density of a graph is an important factor in what competitive ratios can be achieved. Symmetry helps to analyze graphs, as we see for the graphs on four vertices. The special structure of each individual graph has a significant effect on the competitive ratio, and this complicates attempts to generalize these results further.

\paragraph{Acknowledgments}
We gratefully acknowledge Marek Chrobak and Ji{\v{r}}{\'i} Sgall for valuable discussions.


\begin{thebibliography}{10}
	
	\bibitem{DBLP:journals/tcs/BakerC96}
	Brenda~S. Baker and Edward G.~Coffman Jr.
	\newblock Mutual exclusion scheduling.
	\newblock {\em Theor. Comput. Sci.}, 162(2):225--243, 1996.
	\newblock \href {https://doi.org/10.1016/0304-3975(96)00031-X}
	{\path{doi:10.1016/0304-3975(96)00031-X}}.
	
	\bibitem{DBLP:conf/mfcs/BodlaenderJ93}
	Hans~L. Bodlaender and Klaus Jansen.
	\newblock On the complexity of scheduling incompatible jobs with unit-times.
	\newblock In Andrzej~M. Borzyszkowski and Stefan Sokolowski, editors, {\em
		Mathematical Foundations of Computer Science 1993, 18th International
		Symposium, MFCS'93, Gdansk, Poland, August 30 - September 3, 1993,
		Proceedings}, volume 711 of {\em Lecture Notes in Computer Science}, pages
	291--300. Springer, 1993.
	\newblock \href {https://doi.org/10.1007/3-540-57182-5\_21}
	{\path{doi:10.1007/3-540-57182-5\_21}}.
	
	\bibitem{DBLP:journals/tcs/BodlaenderJ95}
	Hans~L. Bodlaender and Klaus Jansen.
	\newblock Restrictions of graph partition problems. part {I}.
	\newblock {\em Theor. Comput. Sci.}, 148(1):93--109, 1995.
	\newblock \href {https://doi.org/10.1016/0304-3975(95)00057-4}
	{\path{doi:10.1016/0304-3975(95)00057-4}}.
	
	\bibitem{DBLP:journals/dam/BodlaenderJW94}
	Hans~L. Bodlaender, Klaus Jansen, and Gerhard~J. Woeginger.
	\newblock Scheduling with incompatible jobs.
	\newblock {\em Discret. Appl. Math.}, 55(3):219--232, 1994.
	\newblock \href {https://doi.org/10.1016/0166-218X(94)90009-4}
	{\path{doi:10.1016/0166-218X(94)90009-4}}.
	
	\bibitem{DBLP:conf/waoa/BuchemKW22}
	Moritz Buchem, Linda Kleist, and Daniel~Schmidt genannt Waldschmidt.
	\newblock Scheduling with machine conflicts.
	\newblock In Parinya Chalermsook and Bundit Laekhanukit, editors, {\em
		Approximation and Online Algorithms - 20th International Workshop, {WAOA}
		2022, Potsdam, Germany, September 8-9, 2022, Proceedings}, volume 13538 of
	{\em Lecture Notes in Computer Science}, pages 36--60. Springer, 2022.
	\newblock \href {https://doi.org/10.1007/978-3-031-18367-6\_3}
	{\path{doi:10.1007/978-3-031-18367-6\_3}}.
	
	\bibitem{private}
	Marek Chrobak.
	\newblock Personal communication.
	
	\bibitem{10.1007/3-540-48224-5_70}
	Marek Chrobak, J{\'a}nos Csirik, Csan{\'a}d Imreh, John Noga, Ji{\v{r}}{\'i}
	Sgall, and Gerhard~J. Woeginger.
	\newblock The buffer minimization problem for multiprocessor scheduling with
	conflicts.
	\newblock In Fernando Orejas, Paul~G. Spirakis, and Jan van Leeuwen, editors,
	{\em Automata, Languages and Programming}, pages 862--874, Berlin,
	Heidelberg, 2001. Springer Berlin Heidelberg.
	
	\bibitem{DBLP:journals/jco/DosaE10}
	Gy{\"{o}}rgy D{\'{o}}sa and Leah Epstein.
	\newblock Online scheduling with a buffer on related machines.
	\newblock {\em J. Comb. Optim.}, 20(2):161--179, 2010.
	\newblock URL: \url{https://doi.org/10.1007/s10878-008-9200-y}, \href
	{https://doi.org/10.1007/S10878-008-9200-Y}
	{\path{doi:10.1007/S10878-008-9200-Y}}.
	
	\bibitem{DBLP:journals/scheduling/EvenHKR09}
	Guy Even, Magn{\'{u}}s~M. Halld{\'{o}}rsson, Lotem Kaplan, and Dana Ron.
	\newblock Scheduling with conflicts: online and offline algorithms.
	\newblock {\em J. Sched.}, 12(2):199--224, 2009.
	\newblock URL: \url{https://doi.org/10.1007/s10951-008-0089-1}, \href
	{https://doi.org/10.1007/S10951-008-0089-1}
	{\path{doi:10.1007/S10951-008-0089-1}}.
	
	\bibitem{DBLP:conf/faw/HohneS20}
	Felix H{\"{o}}hne and Rob van Stee.
	\newblock Buffer minimization with conflicts on a line.
	\newblock In Minming Li, editor, {\em Frontiers in Algorithmics - 14th
		International Workshop, {FAW} 2020, Haikou, China, October 19-21, 2020,
		Proceedings}, volume 12340 of {\em Lecture Notes in Computer Science}, pages
	62--73. Springer, 2020.
	\newblock \href {https://doi.org/10.1007/978-3-030-59901-0\_6}
	{\path{doi:10.1007/978-3-030-59901-0\_6}}.
	
	\bibitem{DBLP:journals/tcs/HohneS21}
	Felix H{\"{o}}hne and Rob van Stee.
	\newblock Buffer minimization with conflicts on a line.
	\newblock {\em Theor. Comput. Sci.}, 876:25--33, 2021.
	\newblock URL: \url{https://doi.org/10.1016/j.tcs.2021.05.013}, \href
	{https://doi.org/10.1016/J.TCS.2021.05.013}
	{\path{doi:10.1016/J.TCS.2021.05.013}}.
	
	\bibitem{DBLP:journals/candie/HongL18}
	Huai{-}Che Hong and Bertrand M.~T. Lin.
	\newblock Parallel dedicated machine scheduling with conflict graphs.
	\newblock {\em Comput. Ind. Eng.}, 124:316--321, 2018.
	\newblock URL: \url{https://doi.org/10.1016/j.cie.2018.07.035}, \href
	{https://doi.org/10.1016/J.CIE.2018.07.035}
	{\path{doi:10.1016/J.CIE.2018.07.035}}.
	
	\bibitem{DBLP:conf/soda/IraniL96}
	Sandy Irani and Vitus~J. Leung.
	\newblock Scheduling with conflicts, and applications to traffic signal
	control.
	\newblock In {\'{E}}va Tardos, editor, {\em Proceedings of the Seventh Annual
		{ACM-SIAM} Symposium on Discrete Algorithms, 28-30 January 1996, Atlanta,
		Georgia, {USA}}, pages 85--94. {ACM/SIAM}, 1996.
	\newblock URL: \url{http://dl.acm.org/citation.cfm?id=313852.313892}.
	
	\bibitem{DBLP:journals/scheduling/KowalczykL17}
	Daniel Kowalczyk and Roel Leus.
	\newblock An exact algorithm for parallel machine scheduling with conflicts.
	\newblock {\em J. Sched.}, 20(4):355--372, 2017.
	\newblock URL: \url{https://doi.org/10.1007/s10951-016-0482-0}, \href
	{https://doi.org/10.1007/S10951-016-0482-0}
	{\path{doi:10.1007/S10951-016-0482-0}}.
	
	\bibitem{DBLP:journals/cor/MallekBB19}
	Amin Mallek, Mohamed Bendraouche, and Mourad Boudhar.
	\newblock Scheduling identical jobs on uniform machines with a conflict graph.
	\newblock {\em Comput. Oper. Res.}, 111:357--366, 2019.
	\newblock URL: \url{https://doi.org/10.1016/j.cor.2019.07.011}, \href
	{https://doi.org/10.1016/J.COR.2019.07.011}
	{\path{doi:10.1016/J.COR.2019.07.011}}.
	
	\bibitem{9317026}
	Amin Mallek and Mourad Boudhar.
	\newblock A branch-and-bound algorithm for the problem of scheduling with a
	conflict graph.
	\newblock In {\em 2020 International Conference on Decision Aid Sciences and
		Application (DASA)}, pages 778--782, 2020.
	\newblock \href {https://doi.org/10.1109/DASA51403.2020.9317026}
	{\path{doi:10.1109/DASA51403.2020.9317026}}.
	
	\bibitem{DBLP:journals/itor/MallekB24}
	Amin Mallek and Mourad Boudhar.
	\newblock Scheduling on uniform machines with a conflict graph: complexity and
	resolution.
	\newblock {\em Int. Trans. Oper. Res.}, 31(2):863--888, 2024.
	\newblock URL: \url{https://doi.org/10.1111/itor.13170}, \href
	{https://doi.org/10.1111/ITOR.13170} {\path{doi:10.1111/ITOR.13170}}.
	
\end{thebibliography}

\end{document}